\numberwithin{equation}{section}
\newtheorem{Thm}{Theorem}[section]
\newtheorem{Lemma}{Lemma}[section]
\title{Static Stellar Phase Transitions in General Relativity and a Generalized Buchdahl Limit}
\author[M.\ Reintjes]{Moritz Reintjes$^*$}
\address[*]{Department of Mathematics\\ City University of Hong Kong \\ Hong Kong}
\email{moritzreintjes@gmail.com}
\author[R.\ Xia]{Ruochen Xia$^*$}
\thanks{Both authors contributed to all aspects of the paper. }
\date{November 18, 2025}
\begin{document}

\maketitle

\begin{abstract}
We give the first general construction of solutions of the static spherically symmetric Einstein-Euler equations, the Tolman-Oppenheimer-Volkoff (TOV-)equation, with prescribed density functions allowed to be discontinuous and non-uniform; these solutions describe stellar phase transitions in General Relativity. Boundedness of the resulting pressure functions solving the TOV-equations, from the boundary down to the stellar center, is obtained by identifying a novel condition on the prescribed density, in generalization of the classical Buchdahl limit. Moreover, we introduce a new necessary condition for the existence of such bounded pressure functions, which in the special case of a uniform density state reduces to the classical Buchdahl limit on the stellar mass-radius relationship. We present various examples to study the stellar mass-radius relationships resulting from our new conditions.
\end{abstract}


\section{Introduction} \label{Sec_Intro}

The study of relativistic stars goes back to K. Schwarzschild's 1916 paper \cite{schwarzschild1916}, where he derived an explicit solution of the static spherically symmetric Einstein-Euler equations (i.e., with energy-momentum tensor of a perfect fluid) in terms of the fluid pressure, assuming a prescribed constant density function. In Schwarzschild's stellar model, the pressure vanishes at the surface of the star, where the interior spacetime metric matches the Schwarzschild metric continuously, but the constant density jumps {\it discontinuously} to zero. This is the first fluid dynamical phase transition constructed in General Relativity (GR), describing the transition from a uniform (constant) density state to vacuum \cite{Choquet, Weinberg}. 

The general theory of static spherically symmetric solutions of the Einstein-Euler equations, describing the interior of static fluid spheres without viscosity, was developed in \cite{Tolman, OV}. This led to the formulation of the famous Tolman-Oppenheimer-Volkoff (TOV-)equation in 1931, a non-linear first order ordinary differential equation (ODE), relating the pressure $p$ with the density $\rho$ of a fluid co-moving with spacetime. It is well-known that it is in general not feasible to find explicit solutions in closed form of the TOV-equation due to its non-linear structure \cite{martins2019}, and significant efforts have been taken to find algorithms for generating new solutions from known ones \cite{generating-solutions2002, generating-solutions2003, generating-solutions2004, generating-solutions2005, generating-solutions2007, generating-solutions2008}. 

The first rigorous existence theory for the TOV-equation was only given in 1991 by Rendall and Schmidt \cite{Rendall} for a prescribed barotropic equation of state, ($p=p(\rho)$ or equivalently $\rho=\rho(p)$), which is assumed to be monotonous and continuous. This assumption of continuous monotonicity rules out phase transitions. For example, phase transitions require a change in monotonicity of $p(\rho)$, like the up-down-up shape of the van-der-Waals equation of state, in order to connect discontinuities in the density as constant pressure states.  The rigorous construction of stellar phase transitions in GR was so far achieved in special settings, often to model neutron stars, in particular, Christodoulou's two phase model \cite{christodoulou_self-gravitating_1996}, as well as constructions for prescribed special density functions (with the pressure as the unknown in the TOV-equation), matching constant density states across a single discontinuity (the phase transition) \cite{seidov1971, Lindblom, zdunik1987}, or across a finite number of discontinuities \cite{mehra_1969}, or matching a constant density state to density functions with quadratic decay \cite{durgapal_1969, durgapal_1971, gehlot_1971, mehra_1981}. Despite the significant recent progress made in the study of the TOV-equations (see \cite{smoller1994, smoller1997, smoller1998, generating-solutions2002, generating-solutions2003, generating-solutions2004, generating-solutions2005, generating-solutions2007, generating-solutions2008, Martin_2003, andersson_asymptotic_2019, martins2019, harko_exact_2016, heinzle_finiteness_2002} and references therein), the question under what conditions the TOV-equations with general prescribed {\it non-uniform}  and {\it discontinuous} density functions---describing general phase transitions---are solvable, and how to construct their solutions, remains open. 

In this paper we resolve this question by introducing a checkable condition, sufficient for the existence of solutions with finite pressure $p(r)$, which generalizes the classical Buchdahl limit \cite{Buchdahl} to general non-uniform and discontinuous density functions.   Based on this condition, we then give the first construction of general static stellar phase transitions in GR, by constructing solutions $p(r)$ of the TOV-equations with prescribed density, allowing for general discontinuous and variable density functions, and by proving global existence of solutions all the way from the vacuum boundary down to the stellar center at $r=0$. Moreover, we introduce a new necessary condition on a prescribed density for the existence of such bounded pressure functions, which in the special case of a constant density state reduces to the classical Buchdahl limit on the stellar mass-radius relationship. 

Let us remark that it is well-known that solutions of the TOV-equation (a differential equation of Ricatti form) may blow-up. Thus identifying conditions to avoid blow-up of solutions is central to any general construction of solutions of the TOV-equation. This is accomplished in this paper for general prescribed non-uniform density functions containing a finite number of jump discontinuities, and our condition guarantees finiteness of the pressure. Analogously, in \cite{Rendall, heinzle_finiteness_2002, andersson_asymptotic_2019} conditions ware identified to guarantee solutions of finite extend.            

In Section \ref{Sec_Prel}, we review the TOV-equations and introduce stellar phase transitions based on Israel's matching condition \cite{israel1966}. In Section \ref{Sec_Results}, we state our results, including our sufficient condition (Theorem \ref{Thm_existence}) as well as our necessary condition (Theorem \ref{Thm_necessary}). In Section \ref{Sec_staircase-density}, we solve the TOV-equation for the special case of piece-wise constant density functions explicitly, based on introducing a variable change as well as a matching construction, (see \cite{mehra_1969} for an analogous construction). Based on this, we prove in Section \ref{Sec_proof_suff} existence of solutions $p(r)$ to the TOV-equations for density functions satisfying our sufficient condition \eqref{Thm1}, describing general static phase transition in GR, thereby establishing our main result, Theorem \ref{Thm_existence}. This proof is constructive, based on establishing convergence of the solutions constructed in Section \ref{Sec_staircase-density};  authors expect this can be implemented numerically. In Section \ref{Sec_proof_nec}, we give the proof of our necessary condition, Theorem \ref{Thm_necessary}. In Section \ref{Sec_app_examples}, we work out various examples of density profiles of stellar structures to illustrate the relation between our necessary and sufficient conditions and the classical Buchdahl limit, as well as their implications to mass-radius relationships.

\section{Preliminaries} \label{Sec_Prel}

In this paper we address the coupled Einstein-Euler equations
\begin{eqnarray} \label{Einstein-Euler}
	G_{\mu\nu} = 8\pi T_{\mu\nu}   
	\hspace{1cm} \text{and} \hspace{1cm}
	{T^{\mu\nu}}_{; \nu} = 0,
\end{eqnarray}
where $G_{\mu\nu}\equiv R_{\mu\nu} - \frac12 R g_{\mu\nu}$ is the Einstein tensor and $T^{\mu\nu} \equiv (\rho+p)u^{\mu}u^{\nu}+pg^{\mu\nu}$ is the energy-momentum tensor of a perfect fluid with pressure $p$, density $\rho$ and fluid $4$-velocity $u^\mu$ normalized to $u^\sigma u_\sigma =-1$. We adopt geometric units, i.e., the speed of light is $c=1$ and Newton's constant is $\mathcal{G}=1$. We further use the Einstein convention of summing over repeated upper and lower indices. We assume the metric tensor $g = g_{\mu\nu}dx^\mu dx^\nu$ is static and spherically symmetric, and make the ansatz
\begin{equation}\label{TOV_metric}
	g = - e^{2\nu(r)} dt^2 + e^{2\lambda(r)} dr^2 + r^2 (d\theta^2 + \sin^2(\theta) d\phi^2).
\end{equation}
We assume further that the fluid $4$-velocity is co-moving, i.e., $u^0 = e^{-\nu}$ and $u^j=0$ for $j=1,2,3$. It was shown by Tolman, Oppenheimer and Volkoff in \cite{Tolman} that the Einstein-Euler equations are then equivalent to
\begin{eqnarray}   
	\frac{d m}{dr} &=& 4\pi r^2 \rho \label{TOV2} \\
	\frac{d \nu}{dr}  &=& - \frac{dp}{dr} (p+\rho)^{-1}  \label{TOV1}    \\
	\frac{dp}{dr} &=& - \frac{\rho m}{r^2} \Big( 1 + \frac{p}{\rho} \Big)  \Big(1+ \frac{4\pi r^3 p}{m} \Big)  \Big( 1 - \frac{2m}{r} \Big)^{-1}, \label{TOV3}
\end{eqnarray}
assuming the mass function $m(r)$ and metric component $e^{2\lambda}$ are related by $e^{2\lambda} = \left(1 - \frac{2m}{r}\right)^{-1}$, $r>2m(r)$. Equation \eqref{TOV3} is the celebrated {\it TOV-equation}, a non-linear first order system of ODE's.

There are two common approaches for studying equations \eqref{TOV1} - \eqref{TOV3}. The first approach prescribes a barotropic equation of state $p=p(\rho)$, (or $\rho=\rho(p)$), and seeks to solve the TOV-equation \eqref{TOV3} for $\rho$ (or $p$) as the unknown, such that $\rho$ (or $p$) is bounded and has finite extend (i.e., a finite stellar radius). The second approach prescribes the density function over a finite domain (the radius of the star), and seeks to solve the TOV-equation \eqref{TOV3} for the pressure function (instead of prescribing it via an equation of state), such that the pressure function stays finite over the domain of the star as a condition of physical admissibility of the density profile.\footnote{Note that, for a prescribed positive density function, \eqref{TOV3} implies that $p(r)$ is strictly monotonously decreasing. Thus, if $\rho$ is assumed to be strictly monotonously decreasing, then one can solve for the radius as $r=r(\rho)$ and recover a barotropic equation of state $p=p(\rho)$, or alternatively in the form $\rho=\rho(p)$.}

Throughout this paper we adopt the latter approach, that the density function is prescribed, and that the pressure function is the unknown of the TOV-equation. Then, equation \eqref{TOV2} can be integrated directly as 
\begin{equation}\label{m(r)}
	m(r) = \int^r_0 4\pi r^2 \rho(r) dr,
\end{equation}
and equation \eqref{TOV1} can be solved for $\nu(r)$ once the TOV-equations \eqref{TOV1} is solved in terms of the pressure $p(r)$. For a prescribed density, the TOV-equation turns into an ODE of Ricatti-type, a non-linear equation which is known for blow-up of solutions, and thus the main difficulty lies in establishing global existence of solutions, all the way from the vacuum boundary to the center of the star.

The objective of this paper to give a general construction of solutions to \eqref{TOV1} - \eqref{TOV3}, starting with a given bounded density function. By allowing for discontinuities in the density, we also include fluid dynamical phase transitions. That is, discontinuities in $\rho$ give rise to Lipschitz continuous metrics \cite{Reintjes:2021rcz, Reintjes:2022rmh, ReintjesTemple_essreg}, and by Israel's matching theory \cite{israel1966} we obtain a weak solution of the Einstein-Euler equation as long that smooth (classical) solutions are matched across the surfaces where $\rho$ is discontinuous such that the Israel junction conditions (also known as the Rankine-Hugoniot conditions) hold,
\begin{equation}\label{RH-condition}
	[T^{\mu\nu}] N_\nu =0,
\end{equation} 
where $N^\mu$ is the unit normal vector to the surface of discontinuity. In the case of static spherically symmetric metrics \eqref{TOV_metric} and co-moving fluids, \eqref{RH-condition} reduces to the condition that the pressure be continuous across discontinuities in the density,
\begin{equation}\label{RH_p_cont}
	[p]=0.
\end{equation} 
When $p=p(\rho)$, condition \eqref{RH_p_cont} can only be met across a discontinuity in the density if $p=p(\rho)$ changes its monotonicity in between the two density states, which constitutes a fluid dynamical phase transition. For example, by the up-down-up shape of the van-der-Waals equation of state $p(\rho) = \frac{\rho}{\beta - \rho} - \alpha \rho^2$, one can match $p$ continuously across any discontinuity in the density as long that the two density states lie on different sides of the critical region where $p'(\rho)<0$. We conclude that any continuous solution $p$ of the TOV-equation for a given discontinuous density function $\rho(r)$ describes a phase transition at each discontinuity in $\rho$. Note, the entropy equation $(s u^\mu)_{;\mu}=0$ automatically implies that $[s]=0$ holds as an identity across phase transitions in our static spherically symmetric setting. Our strategy in this paper is to start with a given density function $\rho(r)$, containing multiple discontinuities, such that $\rho$ vanishes beyond some finite $R>0$ (the radius of the star). We then prove that, subject to the integral condition \eqref{Thm1} below, one can always solve the TOV-equation \eqref{TOV3} with a Lipschitz continuous pressure function $p(r)$. This describes a stellar phase transition.

\section{Statement of Results} \label{Sec_Results}

We now introduce and state our main results, Theorems \ref{Thm_existence} and \ref{Thm_necessary} below. For this, we consider a given density function $\rho(r)$, defined for $r\in [0,R]$, such that $\rho(r)=0$ for all $r>R$; that is, $R>0$ is the radius of the star and $\rho(R)$ may be non-zero. We assume throughout the paper that $\rho(r)$ is bounded, monotonously decreasing (i.e., non-increasing), and piece-wise continuous, containing at most a finite number of discontinuities, (representing a finite number of phase transitions). Moreover, we assume throughout that the density function is such that the mass distribution lies everywhere outside of its Schwarzschild radius, in the sense that there exists some constant $K \in (0,1)$, such that\footnote{Note that a violation of \eqref{Outside_Schwarzschild} appears to lead to singular, potentially un-physical, stellar configurations. Note further that \eqref{Outside_Schwarzschild} still holds at $r=0$ by boundedness of $\rho(r)$, since $\lim_{r\to0}\frac{2m(r)}{r}=\lim_{r\to0} 8\pi r^2 \rho(r) =0$.}
\begin{equation} \label{Outside_Schwarzschild}
	1 - \frac{2m(r)}{r} > K^2.
\end{equation}
Now, in order to satisfy the Rankine-Hugoniot jump condition \eqref{RH_p_cont} at the surface of the star, we need to choose for $p$ the boundary data $p(R)=0$. Likewise, for the geometry outside of the star to agree with the Schwarzschild metric, we impose the data $e^{2\nu(R)} = \left(1-\frac{2m(R)}{R}\right)$. Our existence result for the TOV-equation is the following:

\begin{Thm}  \label{Thm_existence}
Assume $\rho(r)$ is piece-wise continuous, non-increasing and bounded on $[0,R]$, $\rho(r)>0$ for all $r\in [0,R)$ and $\rho(r)=0$ for all $r>R$, and assume \eqref{Outside_Schwarzschild} holds. Assume further there exists a constant $\Delta\in(0,1)$ such that 
	\begin{equation}\label{Thm1}
		\int^{R}_0 \frac{m(r)}{r^2} \left(1-\frac{2m(r)}{r}\right)^{-\frac{3}{2}} dr \ \leq \ \Delta ,
	\end{equation}
where the mass function $m(r)$ is given by \eqref{m(r)}. Then there exists a unique solution $\{p(r),\nu(r)\} \in C^{0,1}([0,R])$ of the static spherically symmetric Einstein-Euler equations, \eqref{TOV1} - \eqref{TOV3}, satisfying the boundary data $p(R)=0$ and $e^{2\nu(R)} = \left(1-\frac{2m(R)}{R}\right)$. Moreover, the pressure function $p(r)$ is everywhere non-negative, strictly monotonously decreasing and bounded on $[0,R]$ with bound 
\begin{equation} \label{bound_p_Thm1}
p(r) \ \leq \ \frac{\Delta }{e^{\lambda(r)}-\Delta}\rho(r) \ \leq \ \frac{\Delta}{1-\Delta} \rho(r), \ \ \ \ \text{for all} \ r\in [0,R].   
\end{equation}
\end{Thm}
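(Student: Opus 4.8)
The plan is to reduce the general case to the piecewise-constant density case, where (per Section \ref{Sec_staircase-density}) the TOV-equation can be solved explicitly via a variable change and a matching across the jumps, and then pass to the limit. First I would fix the prescribed density $\rho(r)$ and approximate it from below by a non-increasing sequence of piecewise-constant ``staircase'' densities $\rho_n(r)$, refining the partition of $[0,R]$ so that $\rho_n \to \rho$ pointwise (and in $L^1$, hence $m_n(r) \to m(r)$ uniformly on $[0,R]$). Each $\rho_n$ is bounded, non-increasing, positive on $[0,R)$, and — since $m_n(r) \le m(r)$ — still satisfies the ``outside Schwarzschild'' bound \eqref{Outside_Schwarzschild} with the same constant $K$, and more importantly still satisfies the integral bound \eqref{Thm1} with the same $\Delta$ (the integrand is monotone in $m$, and $m_n \le m$ makes both $m_n/r^2$ and the negative power of $1 - 2m_n/r$ no larger, so the integral only decreases). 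So for each $n$ I get an explicit Lipschitz solution $p_n(r) \ge 0$, strictly decreasing, with $p_n(R) = 0$.

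The key estimate is the a priori bound \eqref{bound_p_Thm1}, which must be proved first at the level of the explicit staircase solutions and shown to be uniform in $n$. I would derive it from the TOV-equation \eqref{TOV3} rewritten for the auxiliary quantity $\sigma_n(r) := p_n(r)/\rho_n(r)$ (or $p_n + \rho_n$), where on each interval of constancy $\rho_n$ is frozen; on such an interval the TOV-equation becomes a Riccati equation whose right-hand side, using $1 + 4\pi r^3 p/m \ge 1$ and $(p+\rho)/\rho = 1+\sigma$, is controlled by $-\frac{d p_n}{dr} \le \rho_n \cdot \frac{m_n}{r^2}(1+\sigma_n)(1+\sigma_n)(1 - 2m_n/r)^{-1}$; a Grönwall-type integration from $R$ inward, combined with the matching condition $[p_n] = 0$ at the jumps (so $\sigma_n$ jumps but $p_n$ does not, and the bound is phrased in terms of $p_n$ so it carries through the jumps), yields $p_n(r) \le \frac{\Delta}{e^{\lambda_n(r)} - \Delta}\rho_n(r)$, where $e^{\lambda_n(r)} = (1 - 2m_n/r)^{-1/2}$; since $e^{\lambda_n} \ge 1$ this also gives $p_n \le \frac{\Delta}{1-\Delta}\rho_n \le \frac{\Delta}{1-\Delta}\rho$ uniformly. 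The condition $\Delta < 1$ is exactly what keeps the denominator $e^{\lambda_n} - \Delta$ bounded away from zero and hence prevents Riccati blow-up.

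With the uniform bound in hand, I would extract a convergent subsequence: the $p_n$ are uniformly bounded, and from the TOV-equation their derivatives are uniformly bounded on $[0,R]$ (using \eqref{Outside_Schwarzschild}, the uniform pressure bound, and boundedness of $\rho$), so by Arzelà–Ascoli $p_n \to p$ uniformly along a subsequence, with $p$ Lipschitz. Passing to the limit in the integrated form of \eqref{TOV3} — $p_n(r) = \int_r^R (\text{RHS})\,dr'$ — using $m_n \to m$ uniformly and dominated convergence (the integrand converges pointwise wherever $\rho$ is continuous, i.e.\ off a finite set, and is uniformly bounded), shows $p$ solves the TOV-equation \eqref{TOV3} with $p(R)=0$; the bound \eqref{bound_p_Thm1} and monotonicity pass to the limit. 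Then $\nu$ is recovered by integrating \eqref{TOV1} with the prescribed boundary value at $R$, which is well-defined since $p + \rho \ge \rho > 0$ is bounded below, giving $\nu \in C^{0,1}$. Uniqueness follows from a standard Grönwall argument for the Riccati-type equation on each continuity interval, with continuity of $p$ propagating the identification across the finitely many jumps.

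The main obstacle I anticipate is the limit passage at the discontinuities of $\rho$: one must be careful that the matching/Rankine–Hugoniot condition $[p]=0$ is preserved in the limit (it is, because $[p_n]=0$ for all $n$ and $p_n \to p$ uniformly, so $p$ is globally continuous), and that the approximating partitions can be chosen so the staircase construction of Section \ref{Sec_staircase-density} applies at every stage while still converging — this is a bookkeeping issue rather than a deep one. The genuinely delicate point is establishing the uniform-in-$n$ a priori bound \eqref{bound_p_Thm1} directly from the explicit staircase solution: one has to track how $\sigma_n = p_n/\rho_n$ is amplified at each downward jump of $\rho_n$ (where $\rho_n$ decreases, so $\sigma_n$ jumps \emph{up}) and verify that the cumulative amplification is still governed by the single integral in \eqref{Thm1}, which is what ties the sufficient condition to the bound.
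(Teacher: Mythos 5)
Your overall architecture coincides with the paper's: approximate $\rho$ from below by staircase densities (your monotonicity argument for preserving \eqref{Outside_Schwarzschild} and \eqref{Thm1} under $m_n\leq m$ is exactly Lemma \ref{lemma2}), prove a uniform a priori bound at the staircase level, and pass to the limit (the paper proves the whole sequence is Cauchy in $W^{1,\infty}$ rather than extracting an Arzel\`a--Ascoli subsequence, but that difference is cosmetic once uniqueness is in hand). The genuine gap is in the one step you yourself flag as ``delicate'': the derivation of the uniform bound \eqref{bound_p_Thm1}. Your proposed route --- a Gr\"onwall-type integration of a Riccati inequality for $\sigma_n=p_n/\rho_n$ --- is not carried out, and as sketched it does not close. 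First, the differential inequality you write is not correct: bounding $1+4\pi r^3p/m$ requires $m\geq\tfrac{4}{3}\pi r^3\rho$ (from monotonicity of $\rho$), which gives $1+4\pi r^3 p/m\leq 1+3\sigma$, not $1+\sigma$; more importantly, integrating a quadratic inequality in $\sigma$ yields a bound through a nonlinear function of the coefficient integral, and there is no reason it reproduces the sharp threshold $\Delta<1$ or the specific form $\frac{\Delta}{e^{\lambda}-\Delta}\rho$. Second, the ``cumulative amplification'' of $\sigma$ across the downward jumps of $\rho_n$, which you leave unverified, is precisely where a naive estimate loses the tie to the single integral \eqref{Thm1}.

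The missing idea is the harmonic-mean substitution \eqref{trans}, $\phi=\frac{p\rho_i}{p+\rho_i}$, i.e.\ $\frac{1}{\phi}=\frac{1}{p}+\frac{1}{\rho_i}$. On each interval of constancy this linearizes the TOV-equation exactly (Lemma \ref{Lemma_TOV-equivalence}), with integrating factor $e^{\lambda}$, so one gets the closed-form solution \eqref{exact1} rather than an inequality. The Rankine--Hugoniot condition $[p]=0$ becomes $\left[\tfrac{1}{\phi}\right]=\left[\tfrac{1}{\rho}\right]$, and a one-line computation using $p>0$ shows that the normalized quantity $\phi/\rho$ does \emph{not} increase across a jump when moving inward (inequality \eqref{lemma1_techeqn3}); there is no amplification to control in this variable. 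Consequently the contributions of the subintervals telescope exactly,
\begin{equation*}
\frac{\phi_k(r)}{\rho_k}\ \leq\ e^{-\lambda(r)}\int_r^{R}\frac{m(\tau)}{\tau^2}\,e^{3\lambda(\tau)}\,d\tau\ \leq\ e^{-\lambda(r)}\Delta,
\end{equation*}
and inverting $\phi\mapsto p$ via \eqref{pressure0} gives \eqref{bound_p_Thm1} with the stated constant, the condition $\Delta<1$ being exactly what keeps $\phi<\rho$ and hence $p$ finite. Without this (or an equivalent exact linearization), the central estimate of your proposal remains unproved.
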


The solutions constructed in Theorem \ref{Thm_existence} describe gaseous phase transitions, since $p$ is continuous across each discontinuity in $\rho$, cf. Section \ref{Sec_Prel}. The resulting metric tensor \eqref{TOV_metric} is Lipschitz continuous and agrees with the Schwarzschild metric for any $r \in [R,\infty)$. Note that \eqref{bound_p_Thm1} implies that the central pressure $p_c$ is bounded by the central density $\rho_c$ in terms of $p_c \leq \frac{\Delta }{1-\Delta}\rho_c$.
	
Condition \eqref{Thm1} is the {\it sufficient} condition we identify for existence of solutions with bounded pressure function. It generalizes the classical Buchdahl condition \cite{Buchdahl} from a given single constant density to general varying (non-uniform) density functions. In more detail, in the case of a single {\it constant} density, Buchdahl showed that 
\begin{equation} \label{Buchdahl}
	\frac{2M}{R} < \frac{8}{9} 
\end{equation}
is a necessary and sufficient condition for the central pressure $p(0)$ of a star of total mass $M$ and radius $R$ to be finite. This implies that stars of constant density with a radius smaller than $9/8^{th}$ of its Schwarzschild radius $2M$ cannot exist, since no finite central pressure would suffice to counter the gravitational force to uphold the star. The following example illustrates that our condition is a refinement of the classical Buchdahl limit as a sufficient condition. For this, consider density functions with the property that its mass function is of the form
\begin{equation}
	m(r)=Ar^\alpha,
\end{equation} 
for constants $\alpha>1$ and $A>0$. Then, for $1<\alpha\leq3$, taking here for ease of presentation $\Delta=1$, condition \eqref{Thm1} takes on the form 
\begin{equation} \label{Buchdahl_refinement}
	\frac{2M}{R} < 1-\frac{1}{\alpha^2},
\end{equation} 
where $M\equiv m(R)$, (see Section \ref{Sec_app_examples} for details). When $\alpha=3$, the density is a constant and \eqref{Buchdahl_refinement} reduces to the classical Buchdahl condition \eqref{Buchdahl}, while for $1<\alpha<3$, \eqref{Buchdahl_refinement} gives a refinement of the classical Buchdahl limit as a sufficient condition.\footnote{Note that the cases $\alpha>3$ and $\alpha<1$ appear unphysical, since for $\alpha>3$ the density profile $\rho(r)$ is increasing, while for $\alpha<1$ the metric component $e^{2\lambda(r)}$ vanishes as $r\to 0$.} 

Our second main result gives, in equation \eqref{necessary_cond_Thm} below, a generalization of the Buchdahl limit as a {\it necessary} condition for boundedness of the pressure:

\begin{Thm} \label{Thm_necessary}
Assume $\rho(r)$ is a bounded, non-increasing, non-negative, piece-wise differentiable density function with at most finitely many jump discontinuities on $[0,R]$, such that $\rho(r)=0$ for all $r>R$, and such that \eqref{Outside_Schwarzschild} holds for all $r\in[0,R]$. Assume $p(r)$ is a non-negative Lipschitz continuous solution of the TOV equation \eqref{TOV3} with $p(R)=0$ and which is bounded on $[0,R]$. Then the density function $\rho(r)$ satisfies 
\begin{equation}\label{necessary_cond_Thm}
	\rho(r) > e^{-\lambda(r)}\int^{R}_r \frac{m(r)\rho(r) }{r^2}  e^{3\lambda(r)} dr,
\end{equation}
for all $r\in[0,R)$, where $e^\lambda(r)=\left(1-\frac{2m(r)}{r}\right)^{-\frac{1}{2}}$.
\end{Thm}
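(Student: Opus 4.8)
\emph{Proof plan.} The plan is to expose the linear structure hidden in the TOV system and play it off against the geometric (Buchdahl-type) monotonicity forced by $\rho$ being non-increasing. First, combining \eqref{TOV1} and \eqref{TOV3} shows that, for a prescribed density, $p$ solves the \emph{linear} first-order ODE $p'+\nu'p=-\nu'\rho$, where $\nu'=e^{2\lambda}\big(\tfrac{m}{r^{2}}+4\pi r p\big)$; equivalently $\tfrac{d}{dr}\big(e^{\nu}p\big)=-\rho\,\tfrac{d}{dr}e^{\nu}$. Integrating from $r$ to $R$ with $p(R)=0$ — the finitely many jumps of $\rho$ produce no boundary terms since $p$ and $\nu$ are Lipschitz — gives the representation formula
\begin{equation}\label{rep}
  e^{\nu(r)}\,p(r)\;=\;\int_r^R e^{\nu(s)}\,\nu'(s)\,\rho(s)\,ds\;=\;\int_r^R \rho(s)\,\frac{d}{ds}e^{\nu(s)}\,ds ,
\end{equation}
which will be the main tool. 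Two further elementary facts will be used: $\nu'\ge e^{2\lambda}m/r^{2}$ (since $p\ge0$), and $(\nu+\lambda)'=4\pi r(p+\rho)e^{2\lambda}\ge0$ together with $\nu(R)+\lambda(R)=0$, so that $e^{\nu}$ is non-decreasing on $[0,R]$ and $e^{\nu(r)}\le e^{-\lambda(r)}$ there.

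Next I would derive two one-sided estimates. (i) Since $\tfrac{d}{dr}\big(e^{\nu}(p+\rho)\big)=e^{\nu}\rho'\le0$ between jumps and $e^{\nu}(p+\rho)$ decreases across the downward jumps of $\rho$, integrating backward from $R$, bounding $e^{\nu(s)}\le e^{\nu(R)}$, and using $\int_r^R|\rho'|\,ds+\sum|[\rho]|=\rho(r)-\rho(R)$ (at continuity points of $\rho$, which suffices), one obtains the a priori bound
\begin{equation}\label{apr}
  p(r)+\rho(r)\;\le\;e^{\nu(R)-\nu(r)}\rho(r),\qquad\text{hence}\qquad p(r)<e^{\nu(R)-\nu(r)}\rho(r).
\end{equation}
(ii) Using isotropy of the pressure (the identity between $\nu''$ and $p$ coming from $T^{r}{}_{r}=T^{\theta}{}_{\theta}$) together with $4\pi r^{3}\rho(r)\le 3m(r)$ — which is precisely monotonicity of the mean density $\tfrac{3m}{4\pi r^{3}}$ — one shows the generalized Buchdahl monotonicity: $r\mapsto e^{-\lambda(r)}\,\tfrac{d}{dr}e^{\nu(r)}\big/ r$ is non-increasing on $[0,R]$, with value $M/R^{3}$ at $r=R$. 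It is here that boundedness of $p$ enters: it forces $e^{\nu}>0$ on all of $[0,R]$ (the solution reaches the center), so that all of the above identities hold down to $r=0$.

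The conclusion then reduces to a single sharp estimate: it suffices to prove
\begin{equation}\label{crux}
  \int_r^R e^{\nu(s)}\,\nu'(s)\,\rho(s)\,ds\;\ge\;e^{\nu(R)}\,e^{-\lambda(r)}\int_r^R \frac{m(s)\,\rho(s)}{s^{2}}\,e^{3\lambda(s)}\,ds .
\end{equation}
Indeed, \eqref{rep} and \eqref{crux} give $p(r)\ge e^{\nu(R)-\nu(r)}e^{-\lambda(r)}\int_r^R\tfrac{m\rho}{s^{2}}e^{3\lambda}\,ds$, which combined with \eqref{apr} yields $e^{-\lambda(r)}\int_r^R\tfrac{m\rho}{s^{2}}e^{3\lambda}\,ds<\rho(r)$ after dividing by $e^{\nu(R)-\nu(r)}>0$ — exactly \eqref{necessary_cond_Thm}. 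As a consistency check, for $\rho\equiv\mathrm{const}$ one has $e^{\nu}=\tfrac32\sqrt{1-2M/R}-\tfrac12\sqrt{1-2m(r)/r}$, \eqref{crux} is an equality, the right-hand side of \eqref{necessary_cond_Thm} evaluates to $\tfrac{\rho}{2}\big(e^{-\lambda(r)}/e^{-\lambda(R)}-1\big)$, and \eqref{necessary_cond_Thm} becomes $e^{-\lambda(R)}>\tfrac13$, i.e.\ the classical Buchdahl bound $2M/R<\tfrac89$.

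The hard part is \eqref{crux}. Writing $e^{\nu(s)}\nu'(s)=\tfrac{m(s)}{s^{2}}e^{3\lambda(s)}\cdot e^{\nu(s)-\lambda(s)}\big(1+\tfrac{4\pi s^{3}p(s)}{m(s)}\big)$ shows that \eqref{crux} is \emph{not} an integrand-by-integrand inequality: the factor $e^{\nu(s)-\lambda(s)}\big(1+4\pi s^{3}p(s)/m(s)\big)$ falls below the constant $e^{\nu(R)-\lambda(r)}$ near $s=R$ and rises above it for smaller $s$ (these contributions cancelling exactly when $\rho$ is constant), so \eqref{crux} must be established as a genuine integral inequality. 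In particular the pressure's own self-gravity — the $4\pi s^{3}p$ term — cannot simply be discarded: doing so is too lossy, since already the constant-density case shows the resulting estimate would be strictly weaker than the Buchdahl limit. I would attack \eqref{crux} either by a Chebyshev/rearrangement argument exploiting that $e^{-\lambda}\,\tfrac{d}{dr}e^{\nu}/r$ and $\rho$ are both monotone, or — in the spirit of Section~\ref{Sec_staircase-density} — by first proving \eqref{necessary_cond_Thm} for piecewise-constant densities via the explicit matched Schwarzschild-interior solutions there and then passing to the limit. Throughout, the finitely many jumps of $\rho$ are benign: being downward, they contribute with a favourable sign in every integration by parts.
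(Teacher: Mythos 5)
Your architecture---a representation formula for $e^{\nu}p$, the a priori bound $p+\rho \le e^{\nu(R)-\nu}\rho$, and a final sharp integral inequality---is internally consistent, and the first two ingredients are correct: $\tfrac{d}{dr}(e^{\nu}p)=-\rho\,\tfrac{d}{dr}e^{\nu}$ does follow from \eqref{TOV1} and \eqref{TOV3}, and your derivation of the a priori bound from $\tfrac{d}{dr}\big(e^{\nu}(p+\rho)\big)=e^{\nu}\rho'\le 0$ together with the downward jumps is sound. However, the proof is not complete: the entire argument hinges on the inequality
\[
\int_r^R e^{\nu(s)}\nu'(s)\rho(s)\,ds \;\ge\; e^{\nu(R)}e^{-\lambda(r)}\int_r^R \frac{m(s)\rho(s)}{s^{2}}\,e^{3\lambda(s)}\,ds,
\]
which you yourself call ``the hard part'' and for which you offer only two candidate strategies (Chebyshev/rearrangement; piecewise-constant approximation plus limit) without carrying either out. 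This is a genuine gap, not a routine verification: as you correctly observe, the inequality fails integrand-by-integrand and is an exact equality for constant density, so any proof must exploit cancellation. Worse, the tools you have assembled do not reach it: your a priori bound yields an \emph{upper} bound $e^{\nu(r)}p(r)\le e^{\nu(R)}\Phi(r)$ on precisely the quantity the crux must bound from \emph{below}, where $\Phi=p\rho/(p+\rho)$, so it cannot be used to establish the crux; and the ``generalized Buchdahl monotonicity'' of step (ii) is never invoked in the logical chain from your three displayed estimates to the conclusion, so it does not close the gap either. Via your representation formula the crux is equivalent to $p(r)\ge e^{\nu(R)-\nu(r)}e^{-\lambda(r)}\int_r^R\tfrac{m\rho}{s^{2}}e^{3\lambda}ds$, a strictly stronger lower bound on $p$ than the theorem actually requires.

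The paper's proof is much shorter and avoids the crux entirely by working directly with the variable $\Phi=p\rho/(p+\rho)$ (the same substitution as \eqref{trans}). Substituting into \eqref{TOV3} gives $\Phi'=\tfrac{\rho'}{\rho^{2}}\Phi^{2}+\tfrac{(m-4\pi r^{3}\rho)\Phi-m\rho}{r(r-2m)}$; since $\rho'\le 0$ the quadratic term can be discarded, leaving the linear differential inequality $\tfrac{d}{dr}\big(\Phi e^{\lambda}\big)\le -\tfrac{m\rho}{r^{2}}e^{3\lambda}$ on each smooth piece. Integrating piece by piece, using $\Phi(R)=0$ and the fact that \eqref{RH_p_cont} together with $\rho(R_i^-)\ge\rho(R_i^+)$ forces $\Phi(R_i^-)\ge\Phi(R_i^+)$ (so the jumps contribute with a favourable sign, as you anticipated), one obtains $\Phi(r)e^{\lambda(r)}\ge\int_r^R\tfrac{m\rho}{s^{2}}e^{3\lambda}ds$. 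Boundedness of $p=\tfrac{\rho\Phi}{\rho-\Phi}$ then gives $\rho(r)>\Phi(r)$, and \eqref{necessary_cond_Thm} follows immediately. The lesson is that monotonicity of $\rho$ and boundedness of $p$ need only enter through the sign of $\rho'\Phi^{2}/\rho^{2}$, the sign of $[\Phi]$ at jumps, and the elementary fact $\Phi<\rho$; no representation formula for $e^{\nu}p$ and no sharp integral inequality are needed. If you wish to complete your route, you must actually prove the crux inequality; as it stands the argument establishes only the two auxiliary estimates and reduces the theorem to an unproven (and delicate, since sharp) claim.
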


Note that condition \eqref{necessary_cond_Thm} reduces to the classical Buchdahl condition when the density is constant. The proof of Theorem \ref{Thm_existence} is based on an explicit solution in integral form of the TOV-equation for staircase density functions, worked out in Section \ref{Sec_staircase-density}, obtained by a change of variables linearizing the TOV-equations. In Section \ref{Sec_proof_suff}, we complete the proof of Theorem \ref{Thm_existence}, by first approximating general density functions by ``staircase'' density functions, and by then applying the existence result of Section \ref{Sec_staircase-density} on sub-intervals in combination with a proof of convergence. In Section \ref{Sec_proof_nec} we give the proof of Theorem \ref{Thm_necessary}. In Section \ref{Sec_app_examples}, we discuss some examples exploring the degree of generality of conditions \eqref{Thm1} and \eqref{necessary_cond_Thm}.

\section{Exact Solutions for Staircase Density Functions}  \label{Sec_staircase-density}

The goal of this section is to solve the TOV-equation \eqref{TOV3} for ``staircase'' density functions. That is, we assume the density profile $\rho(r)$ is a step function which takes the form
\begin{equation}\label{density}
	\rho(r)=\begin{cases}
		\rho_i,\quad &r\in [R_{i-1},R_{i}),\\
		0,\quad &r\in [R_N,+\infty),\\
	\end{cases}
\end{equation}
where $\rho_i$ is a sequence of positive constants for each $i \in \{1,...,N\}$, ($N\in \mathbb{N}$ finite), and $R_i$ is a strictly increasing sequence of positive real numbers with $R_0=0$ (the stellar center) and $R_N=R$ (the stellar boundary). The goal now is to solve the TOV-equation \eqref{TOV3}, which decouples from the remaining Einstein-Euler equations \eqref{TOV1} - \eqref{TOV2}, on each interval $[R_{i-1},R_{i})$. We accomplish this by introducing a change of variables which linearizes the TOV-equations for each fixed $\rho_i$. By gluing the resulting solutions according to the RH-conditions \eqref{RH_p_cont}, we obtain a continuous solution $p$ of the TOV-equation on $[0,R]$ via a step-by-step procedure explained in the end of this section.    

The TOV-equation, which is a non-linear ODE of Ricatti-type, is difficult to solve in closed form even when the density is piece-wise constant; (for constant densities \eqref{TOV3} is known to be separable and solvable in closed form, but for a staircase density function separability fails, since $m(r)/r^3$ is no longer constant).  We resolve this obstacle by introducing the variable $\phi_i(r)$ on $[R_{i-1},R_i]$ as the harmonic mean
\begin{equation}\label{trans}
	\phi_i(r) \equiv \frac{p(r) \rho_i}{p(r)+\rho_i},
\end{equation}
by which we achieve a linearization of the original TOV equation as follows.  

\begin{Lemma} \label{Lemma_TOV-equivalence}
Assume $\rho_i >0$ is a constant density function on $[R_{i-1},R_i)$. \\ 
\noindent {\rm (i)} The TOV equation \eqref{TOV3} for $\rho_i$ on $[R_{i-1},R_i)$ is equivalent to 
	\begin{equation}\label{linearODE}
		\frac{d\phi_i(r)}{dr}+ \frac{4 \pi  r^3\rho_i -m(r)}{r (r-2m(r))} \phi_i(r)= \frac{-m(r) \rho_i  }{r(r-2m(r))}.
	\end{equation}
	That is, if $p$ is a solution of \eqref{TOV3}, then $\phi_i$ defined by \eqref{trans} solves \eqref{linearODE}; vice versa, if $\phi_i$ is a solution of \eqref{linearODE}, then $p$ defined by
	\begin{equation}\label{pressure0}
		p(r)=\frac {\rho_i\,\phi_i(r)}{\rho_i-\phi_i (r)}
	\end{equation}
solves the original TOV-equation \eqref{TOV3}. \\
\noindent {\rm (ii)} The general solution of the transformed TOV-equation \eqref{linearODE} on $[R_{i-1},R_i)$ is given by
	\begin{equation}\label{exact1}
		\phi_i(r)= e^{-\lambda(r)}
		\left(
		c_i + \int^{R_i}_r \frac{m\rho_i }{r^2}  \Big(1-\frac{2m}{r}\Big)^{-\frac{3}{2}} dr
		\right),
	\end{equation}
	where $c_i \in \mathbb{R}$ is some constant; and $p(r)$ defined by \eqref{pressure0} gives the corresponding solution of the original TOV-equation \eqref{TOV3}. 
\end{Lemma}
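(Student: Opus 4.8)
The plan is to establish both parts by direct computation. The guiding observation is that the harmonic-mean substitution \eqref{trans} converts the Riccati-type nonlinearity of the TOV-equation \eqref{TOV3} into a scalar linear first-order ODE, and that the integrating factor of this linear ODE turns out to be exactly the metric coefficient $e^{\lambda(r)}=(1-\tfrac{2m}{r})^{-1/2}$; once this is recognized, \eqref{exact1} follows from the variation-of-constants formula.

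For part (i), I would first invert \eqref{trans}: solving $\phi_i(p+\rho_i)=p\rho_i$ for $p$ gives $p=\rho_i\phi_i/(\rho_i-\phi_i)$, which is \eqref{pressure0}. Since $p\mapsto\phi_i$ is a smooth strictly increasing bijection of $[0,\infty)$ onto $[0,\rho_i)$ with smooth inverse, the two equations are equivalent on the physically relevant ranges of the unknowns, and it suffices to carry the computation out in one direction. Differentiating \eqref{trans} gives $\phi_i'=\tfrac{\rho_i^2}{(p+\rho_i)^2}\,p'$. I would then rewrite \eqref{TOV3} in the compact form $p'=-\tfrac{(p+\rho_i)(m+4\pi r^3p)}{r(r-2m)}$, substitute, and use the two identities $\tfrac{\rho_i^2}{p+\rho_i}=\rho_i-\phi_i$ and $(\rho_i-\phi_i)(m+4\pi r^3p)=m\rho_i+(4\pi r^3\rho_i-m)\phi_i$ (the latter being immediate from $p(\rho_i-\phi_i)=\rho_i\phi_i$). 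These collapse the right-hand side to $-\tfrac{m\rho_i+(4\pi r^3\rho_i-m)\phi_i}{r(r-2m)}$, which after rearrangement is precisely \eqref{linearODE}; reading the chain of equalities backwards, together with \eqref{pressure0}, gives the converse.

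For part (ii), equation \eqref{linearODE} has the form $\phi_i'+P(r)\phi_i=Q(r)$ with $P(r)=\tfrac{4\pi r^3\rho_i-m}{r(r-2m)}$ and $Q(r)=-\tfrac{m\rho_i}{r(r-2m)}$; by \eqref{Outside_Schwarzschild} the factor $r-2m(r)$ is bounded away from $0$, so $P$ and $Q$ are continuous on $[R_{i-1},R_i]$ (in particular bounded near $r=0$ when $i=1$, where $m(r)=\tfrac{4\pi}{3}\rho_1r^3$), and the integrating-factor method applies. The key step is to verify that $e^{\lambda(r)}$ is an integrating factor: differentiating $\lambda(r)=-\tfrac12\ln(1-\tfrac{2m}{r})$ and using $m'(r)=4\pi r^2\rho_i$ on this interval yields $\lambda'(r)=\tfrac{4\pi r^3\rho_i-m}{r(r-2m)}=P(r)$. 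Hence $\big(e^{\lambda}\phi_i\big)'=e^{\lambda}Q$, and a short simplification gives $e^{\lambda}Q=-\tfrac{m\rho_i}{r^2}(1-\tfrac{2m}{r})^{-3/2}$; integrating this from $r$ to $R_i$ and setting $c_i:=e^{\lambda(R_i)}\phi_i(R_i)$ produces \eqref{exact1}, and composing with \eqref{pressure0} gives the corresponding solution of \eqref{TOV3} by part (i).

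There is no genuine conceptual obstacle: the computations are elementary once the substitution is in hand. The two places that require discipline are the algebraic bookkeeping in part (i)—one must express each factor in terms of $p$ or of $\phi_i$ at the right moment so that the nonlinear terms cancel—and, in part (ii), spotting that the integrating factor is exactly $e^{\lambda}$, which is what makes the closed form \eqref{exact1} as clean as stated.
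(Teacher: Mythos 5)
Your proof is correct and follows essentially the same route as the paper: the same harmonic-mean substitution and algebraic cancellation for part (i) (you differentiate $\phi_i$ in terms of $p$ rather than $p$ in terms of $\phi_i$, which is an immaterial difference), and the same integrating factor $e^{\lambda}$ with $\lambda'=\frac{4\pi r^3\rho_i-m}{r(r-2m)}$ followed by integration from $r$ to $R_i$ for part (ii). Your added remarks on the bijectivity of $p\mapsto\phi_i$ and the boundedness of the coefficients near $r=0$ are correct and slightly more careful than the paper's write-up.
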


\begin{proof}
The TOV-equation \eqref{TOV3} for $\rho_i$ on $[R_{i-1},R_i)$ is given by 
\begin{equation} \label{TOV3_rho_i}
\frac{dp}{dr} = - \frac{\rho_i m}{r^2} \Big( 1 + \frac{p}{\rho_i} \Big)  \Big(1+ \frac{4\pi r^3 p}{m} \Big)  \Big( 1 - \frac{2m}{r} \Big)^{-1}.
\end{equation}
Now, to prove part (i), differentiate equation \eqref{pressure0} and substitute the resulting expression for $\frac{dp}{dr}$ into \eqref{TOV3_rho_i}, which gives
\[
\frac {\rho_i^2}{(\rho_i-\phi_i)^2} \frac{d\phi_i}{dr}
=
- \frac{1}{r(r-2m)} \left( 
\rho_i + \frac {\rho_i \phi_i}{\rho_i-\phi_i} \right)  \left(m+4\pi r^3 \frac {\rho_i \phi_i}{\rho_i-\phi_i} \right),
\]
from which we obtain \eqref{linearODE}. This proves the equivalence asserted in part (i).

We now prove part (ii). For the linear ODE  \eqref{linearODE},	the following is an integrating factor 
\begin{equation}\label{int_facttor}
	I(r)=\exp\left(\int  \frac{4 \pi  r^3 \rho_i-m}{r (r-2 m)} dr\right)=
	\left(1-\frac{2m}{r}\right)^{-\frac{1}{2}}=e^{\lambda(r)} , 
\end{equation}
and multiplying \eqref{linearODE} by its integrating factor $e^{\lambda(r)}$, we write \eqref{linearODE} equivalently as
$$
\frac{d}{dr}\Big( e^{\lambda(r)}\phi_i(r) \Big) = - e^{\lambda(r)} \frac{m\rho_i }{r^2}  \Big(1-\frac{2m}{r}\Big)^{-1}.
$$
Integration from $r$ to $R_i$ directly yields the exact solution \eqref{exact1} on $[R_{i-1}, R_i)$, as claimed. 
\end{proof}

Based on Lemma \ref{Lemma_TOV-equivalence}, we now construct a unique solution $p(r)$ of the TOV-equation for staircase density functions on $[0,R]$, such that $p(R)=0$ and such that $p(r)$ is Lipschitz continuous on $[0,R]$. For this, we use the RH-condition \eqref{RH_p_cont} to determine the free constants $c_i$ recursively, starting at the stellar surface $R=R_N$ and moving towards the center at $R_0=0$. To begin, note that $p(R_N)=0$, implies by \eqref{trans} that $\phi_N(R_N)=0$ and hence $c_N=0$. Thus, when $r\in[R_{N-1},R_N)$, 
\begin{equation}\label{exact3}
\phi_N (r)= e^{-\lambda(r)} \int^{R_N}_r \frac{m\rho_{N} }{r^2}  \left(1-\frac{2m}{r}\right)^{-\frac{3}{2}} dr.
\end{equation}
We now proceed recursively, moving from the outside to the inside. For this assume $\phi_{i+1}(r)$ is given for $r \in [R_{i},R_{i+1})$ and $i \in \{1,...,N-1\}$, we then choose $c_i$ as
\begin{equation} \label{choice_c_i}
	c_i  
= e^{\lambda(R_i)} \left(\frac{1}{\rho_{i}}+\frac{1}{\phi_{i+1}(R_{i})}-\frac{1}{\rho_{i+1}}\right)^{-1},
\end{equation}
which uniquely determines the next solution $\phi_{i}(r)$ for $r \in [R_{i-1},R_{i})$. The choice \eqref{choice_c_i} guarantees continuity of $p(r)$ across each interface $r=R_i$, because the Rankine Hugoniot jump conditions, $\lim_{r\to R_i^-}p(r)=\lim_{r\to R_i^+}p(r)$, in combination with \eqref{trans}, are equivalent to 
\begin{equation}\label{jump}
	\frac{1}{\phi_i(R_i)}-\frac{1}{\rho_i} 
	= \frac{1}{\phi_{i+1}(R_{i})}-\frac{1}{\rho_{i+1}},
\end{equation}
since $c_i= e^{\lambda(R_i)} \phi_i(R_i)$ by \eqref{exact1}, for each $i \in \{1,...,N-1\}$.  Taken on whole, setting for $r \in [0,R]$
\begin{equation}  \label{phiN_rhoN}
\phi(r) = \sum_{i=1}^{N} \phi_i (r) \chi_{i}(r)
\hspace{1cm} \text{and} \hspace{1cm}
\rho(r) = \sum_{i=1}^{N} \rho_i \chi_{i}(r),
\end{equation}
where $\chi_i(r)$ is the indicator function with $\chi_{i}(r) =1$ for $r\in[R_{i-1},R_i)$ and $\chi_{i}(r) =0$ otherwise, then
\begin{equation}\label{pressure}
p(r) = \frac{\rho(r)\phi(r)}{\rho(r)-\phi(r)},
\end{equation}
is the sought-after solution of the TOV-equations \eqref{TOV3} for the discontinuous staircase density function \eqref{density}, such that $p(R)=0$ and such that $p(r)$ is Lipschitz continuity on $[0,R]$. Moreover, the pressure function $p(r)$ defined in \eqref{pressure} is the only continuous solution of the TOV-equation \eqref{TOV3} for the staircase density function \eqref{density}, because the solutions of \eqref{linearODE} are unique by the Picard-Lindel\"off Theorem on each interval $[R_i,R_{i+1}]$, and there is only one way to match those solutions continuously.

\section{Sufficient Condition for a Bounded Pressure} \label{Sec_proof_suff}

In this section, we use the exact solution \eqref{pressure0} of the TOV-equation for staircase density functions, developed in Section \ref{Sec_staircase-density}, to prove existence of solutions $p(r)$ of the TOV-equation \eqref{TOV3} for general density functions containing a finite number of discontinuities (describing stellar phase transitions), thereby completing the proof of Theorem \ref{Thm_existence}. For this, assume $\rho(r)$ is piece-wise continuous, non-increasing and bounded on $[0,R]$, $\rho(r)>0$ for all $r\in [0,R)$ and $\rho(r)=0$ for all $r>R$, (so $\rho(R)$ may be non-zero). Moreover, assume \eqref{Outside_Schwarzschild}, i.e., 
\begin{equation} \label{Outside_Schwschild_Sec5} 
1 - \frac{2m(r)}{r} > K^2
\end{equation} 
for some constant $K \in (0,1)$; and assume our sufficient condition \eqref{Thm1}, i.e., there exists some constant $\Delta \in (0,1)$ such that    
\begin{equation} \label{delta}
	\int^{R}_0 \frac{m(r)}{r^2} \left(1-\frac{2m(r)}{r}\right)^{-\frac{3}{2}} dr \leq  \Delta .
\end{equation}
Then Theorem \ref{Thm_existence} asserts that there exists a unique solution $\{p(r),\nu(r)\} \in C^{0,1}([0,R])$ of the static spherically symmetric Einstein-Euler equations, \eqref{TOV1} - \eqref{TOV3}, satisfying the boundary data $p(R)=0$ and $e^{2\nu(R)} = \left(1-\frac{2m(R)}{R}\right)$. Moreover, the pressure function $p(r)$ is everywhere non-negative, strictly monotonously decreasing and bounded on $[0,R]$ with bound     
\begin{equation}
p(r) \ \leq \ \frac{\Delta }{e^{\lambda(r)}-\Delta}\rho(r) \ \leq \ \frac{\Delta}{1-\Delta} \rho(r)  .
\end{equation}

We now prove Theorem \ref{Thm_existence} by first establishing its claim for general staircase density profiles  in Lemma \ref{lemma1}; then, assuming $\rho(r)$ satisfies \eqref{delta}, we prove in Lemma \ref{lemma2} the existence of a sequence of staircase density functions $\{\rho^{(n)}(r)\}$, each satisfying condition \eqref{delta}, which approximates the ``physical'' density function $\rho(r)$; in the last step, in Lemmas \ref{lemma3} and \ref{lemma4}, we prove convergence of the approximate solutions $p^{(n)}(r)$, (i.e., solutions of the TOV-equation for $\rho^{(n)}(r)$), to the sought-after solution $p(r)$ of the TOV-equation for $\rho(r)$.

\begin{Lemma}\label{lemma1}
Let $\rho(r)=\sum_{i=1}^N \rho_i \chi_{i}(r)$ be any non-increasing staircase density function of form \eqref{density}, and let $m(r)$ be the associated mass function solving \eqref{TOV2}. Assume there exists some $\Delta\in(0,1)$ such that \eqref{delta} holds. Then the pressure function $p(r)$, defined in \eqref{pressure}, is the unique continuous solution of the TOV equation \eqref{TOV3} on $[0,R]$ with boundary data $p(R)=0$. Moreover,  $p(r)$ is bounded for all $r\in[0,R]$ by
	\begin{equation} \label{bound_p_Sec5}
		p(r) \ \leq \ \frac{\Delta }{e^{\lambda(r)}-\Delta}\rho(r) \ \leq \ \frac{\Delta}{1-\Delta} \rho(r).
	\end{equation} 
\end{Lemma}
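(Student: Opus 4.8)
The plan is to prove Lemma \ref{lemma1} by leveraging the explicit solution from Lemma \ref{Lemma_TOV-equivalence} together with the recursive matching construction already set up in Section \ref{Sec_staircase-density}. Uniqueness and the fact that $p(r)$ of \eqref{pressure} solves the TOV-equation are essentially already established at the end of Section \ref{Sec_staircase-density} (Picard--Lindel\"of on each subinterval plus the unique continuous matching), so the real content is the bound \eqref{bound_p_Sec5} and the positivity/monotonicity claims. Since $p = \rho\phi/(\rho-\phi)$ is increasing in $\phi$ on the relevant range and $\rho$ is piecewise constant, it suffices to control $\phi(r)$; specifically I would show that $0 \le e^{\lambda(r)}\phi(r) \le \Delta\,\rho(r)\big/\big(\text{something}\big)$, or more precisely that $e^{\lambda(r)}\phi_i(r) \le \Delta \rho_i$ uniformly, from which $p \le \Delta\rho/(e^\lambda - \Delta)$ follows by solving \eqref{pressure0} for $p$ in terms of the bound on $\phi_i$, and then $e^\lambda \ge 1$ gives the cruder bound $\Delta\rho/(1-\Delta)$.

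First I would treat the outermost interval $[R_{N-1}, R_N)$ using the closed form \eqref{exact3}: there $e^{\lambda(r)}\phi_N(r) = \rho_N \int_r^{R_N} \frac{m}{r^2}(1-\tfrac{2m}{r})^{-3/2}\,dr \le \rho_N \Delta$ directly from \eqref{delta}, since the integrand is nonnegative and the integral over $[r,R_N]$ is bounded by the integral over $[0,R]$. This is the base case, and it already gives positivity of $\phi_N$ hence of $p$ there, plus strict monotonicity of $p$ (from the TOV-equation \eqref{TOV3} with $\rho_N, m > 0$). Then I would run the recursion inward: assuming $e^{\lambda(R_i)}\phi_{i+1}(R_i) \le \Delta\rho_{i+1}$ (equivalently, a lower bound $\frac{1}{\phi_{i+1}(R_i)} \ge \frac{e^{\lambda(R_i)}}{\Delta\rho_{i+1}}$), the choice \eqref{choice_c_i} of $c_i$ must be shown to satisfy $0 < c_i \le \Delta\rho_i$; then since by \eqref{exact1} $e^{\lambda(r)}\phi_i(r) = c_i + \rho_i\int_r^{R_i}\frac{m}{r^2}(1-\tfrac{2m}{r})^{-3/2}\,dr$, adding the (nonnegative) integral term and using $\int_0^R(\cdots)\le\Delta$ again keeps $e^{\lambda(r)}\phi_i(r) \le \Delta\rho_i + \Delta\rho_i\cdot(\text{partial integral})$ — wait, this naive addition would overshoot, so the bookkeeping has to be done more carefully: the point is that $c_i$ absorbs only the ``remaining budget'' $\Delta - \int_{R_i}^R(\cdots)$ of the integral, so that $c_i + \rho_i\int_{R_i}^{r}(\cdots)$ telescopes correctly. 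I would make this precise by carrying the inductive hypothesis in the sharpened form $e^{\lambda(R_i)}\phi_{i+1}(R_i) \le \rho_{i+1}\int_0^{R_i}\frac{m}{r^2}(1-\tfrac{2m}{r})^{-3/2}\,dr \le \Delta\rho_{i+1}$, using monotonicity $\rho_{i+1}\le\rho_i\le\cdots$ to handle the density ratios in \eqref{choice_c_i}, and verifying that $c_i \le \rho_i\int_0^{R_i}(\cdots)$, which then propagates.

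The main obstacle I anticipate is the recursive step with the matching formula \eqref{choice_c_i}: one must check that $c_i$ stays strictly positive (so that $p$ stays positive and $\rho_i - \phi_i > 0$, keeping \eqref{pressure0} well-defined and finite) and simultaneously respects the integral budget, and this requires genuinely using the monotonicity $\rho_{i+1} \le \rho_i$ to control the term $\frac{1}{\rho_i} - \frac{1}{\rho_{i+1}} \le 0$ appearing in \eqref{choice_c_i} — intuitively, a density jump downward as $r$ decreases across $R_i$ (going inward the density jumps \emph{up}, $\rho_i \ge \rho_{i+1}$) can only help keep $\phi_i$ small relative to $\rho_i$. I would also need to confirm that the denominator $\frac{1}{\rho_i}+\frac{1}{\phi_{i+1}(R_i)}-\frac{1}{\rho_{i+1}}$ in \eqref{choice_c_i} is positive, which follows because $\frac{1}{\phi_{i+1}(R_i)} - \frac{1}{\rho_{i+1}} = \frac{1}{p(R_i)} > 0$ by the already-established positivity of $p$ on $[R_i, R_{i+1})$, plus $\frac{1}{\rho_i}>0$. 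Once positivity and the budget bound are in hand for every $i$, assembling them via \eqref{phiN_rhoN}--\eqref{pressure} and solving the inequality $e^{\lambda(r)}\phi(r) \le \Delta\rho(r)$ for $p(r) = \frac{\rho\phi}{\rho-\phi}$ yields \eqref{bound_p_Sec5}, and strict monotonicity of $p$ on all of $[0,R)$ follows piecewise from \eqref{TOV3} since $\rho(r), m(r) > 0$ there and $p \ge 0$; continuity across the interfaces has been arranged by construction, giving $p \in C^{0,1}([0,R])$.
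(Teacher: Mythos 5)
Your overall strategy is the same as the paper's: treat the outermost interval via the closed form \eqref{exact3}, then induct inward using the jump condition \eqref{jump} together with the monotonicity $\rho_{i}\geq\rho_{i+1}$, so that the integral contributions from successive intervals telescope against the global budget \eqref{delta}, and finally convert the bound on $\phi$ into \eqref{bound_p_Sec5} via $p=\rho\phi/(\rho-\phi)$. Your base case, your use of $\tfrac{1}{\phi_{i+1}(R_i)}-\tfrac{1}{\rho_{i+1}}=\tfrac{1}{p(R_i)}>0$ to justify positivity of the denominator in \eqref{choice_c_i}, and your final algebra $e^{\lambda}\phi\leq\Delta\rho \Rightarrow p\leq \Delta\rho/(e^{\lambda}-\Delta)\leq\Delta\rho/(1-\Delta)$ all match the paper.

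However, the ``sharpened inductive hypothesis'' you write down has the integration limits reversed, and as stated it is both false and insufficient. You propose carrying
\begin{equation*}
e^{\lambda(R_i)}\phi_{i+1}(R_i)\ \leq\ \rho_{i+1}\int_0^{R_i}\frac{m}{\tau^2}\Big(1-\tfrac{2m}{\tau}\Big)^{-3/2}d\tau ,
\end{equation*}
i.e.\ a bound by the integral over the \emph{inner} region not yet traversed. Already at the first step this fails: from \eqref{exact3}, $e^{\lambda(R_{N-1})}\phi_N(R_{N-1})=\rho_N\int_{R_{N-1}}^{R_N}(\cdots)$, and there is no reason for $\int_{R_{N-1}}^{R_N}(\cdots)\leq\int_0^{R_{N-1}}(\cdots)$. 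Moreover, even if it held, combining $c_i\leq\rho_i\int_0^{R_i}(\cdots)$ with $e^{\lambda(r)}\phi_i(r)=c_i+\rho_i\int_r^{R_i}(\cdots)$ would only give a bound by $\rho_i\bigl(\int_0^{R_i}+\int_r^{R_i}\bigr)(\cdots)$, which double-counts $[r,R_i]$ and can reach $2\Delta\rho_i$. The correct hypothesis — the one the paper carries as \eqref{lemma1_techeqn4} — is the bound by the integral over the \emph{outer}, already-traversed region,
\begin{equation*}
e^{\lambda(R_i)}\phi_{i+1}(R_i)\ \leq\ \rho_{i+1}\int_{R_i}^{R_N}\frac{m}{\tau^2}\Big(1-\tfrac{2m}{\tau}\Big)^{-3/2}d\tau ,
\end{equation*}
which after dividing by $\rho_{i+1}$, applying \eqref{jump} with $\rho_i\geq\rho_{i+1}$ to pass to $\phi_i(R_i)/\rho_i$, and adding $\int_r^{R_i}(\cdots)$ yields exactly $e^{\lambda(r)}\phi_i(r)/\rho_i\leq\int_r^{R_N}(\cdots)\leq\Delta$. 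With that correction (which your own verbal description of the telescoping already points toward) the argument closes and coincides with the paper's proof.
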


\begin{proof}
The proof is based on the exact solutions $\phi_i(r)$, $r \in [R_{i-1},R_i]$, deduced in \eqref{exact1}. Note that we prove in Section \ref{Sec_staircase-density} that $p(r)$ defined in \eqref{pressure} is the sought-after unique solution of the TOV-equation with $p(R)=0$. It thus remains to prove the bound \eqref{bound_p_Sec5}; we do this by induction in $i =1,...,N$, starting from the outside $i=N$. For the case $i=N$, we directly obtain for the exact solution \eqref{exact3} that 
\begin{equation} \label{lemma_1_techeqn2}
\frac{\phi_N (r)}{\rho_N} \overset{\eqref{exact3}}{=}  
e^{-\lambda(r)}	\int^{R_{N}}_r \frac{m(\tau)}{\tau^2} e^{3\lambda(\tau)} d\tau 
\leq e^{-\lambda(r)} \Delta,
\end{equation}
where the last inequality follows from \eqref{delta}, keeping in mind that $e^{\lambda(\tau)} = \left(1-\frac{2m(r)}{r}\right)^{-\frac{1}{2}}$. Thus, since $\frac{1}{p(r)}+\frac{1}{\rho_N}=\frac{1}{\phi_N(r)}$ by \eqref{trans}, equation \eqref{lemma_1_techeqn2} directly implies on $[R_{N-1},R_{N}]$
$$
p(r) \ = \ \frac{\rho_{N}\phi_{N}(r)}{\rho_{N}-\phi_{N}(r)} 
\ \leq \ \frac{\Delta}{e^{\lambda(r)}-\Delta} \rho_N  \ \leq \ \frac{\Delta}{1-\Delta} \rho(r),
$$ 
where the last inequality follows since $e^{\lambda(r)} \geq 1 > \Delta$. 

Now, to continue the induction, we use that the jump condition \eqref{jump} at $r=R_{N-1}$ implies 
\begin{equation} \label{lemma1_techeqn3}
\frac{\phi_{N-1}(R_{N-1})}{\rho_{N-1}}\leq \frac{\phi_{N}(R_{N-1})}{\rho_{N}},
\end{equation} 
since $\rho_{N-1}\geq \rho_{N}>0$. Thus, for $i=N-1$, the exact solution \eqref{exact1} in combination with \eqref{lemma1_techeqn3} and \eqref{delta} yields
	\begin{align*}
		\frac{\phi_{N-1}(r)}{\rho_{N-1}}
		&=
		e^{-\lambda(r)}
		\left(
		e^{\lambda(R_{N-1})} \frac{\phi_{N-1}(R_{N-1}) }{{\rho_{N-1}}}+	\int^{R_{N-1}}_r \frac{m(\tau)}{\tau^2} e^{3\lambda(\tau)} d\tau 
		\right) \\
		&\overset{\eqref{lemma1_techeqn3}}{\leq}
		e^{-\lambda(r)}
		\left(
		e^{\lambda(R_{N-1})} \frac{\phi_{N}(R_{N-1})}{\rho_{N}} + \int^{R_{N-1}}_r \frac{m(\tau)}{\tau^2} e^{3\lambda(\tau)} d\tau 
		\right) \\
		&\leq
		e^{-\lambda(r)}
		\left(
		\int^{R_{N}}_{R_{N-1}} \frac{m(\tau)}{\tau^2} e^{3\lambda(\tau)} d\tau + \int^{R_{N-1}}_r \frac{m(\tau)}{\tau^2} e^{3\lambda(\tau)} d\tau 
		\right) \\
		&=e^{-\lambda(r)}
		\int^{R_{N}}_r \frac{m(\tau)}{\tau^2} e^{3\lambda(\tau)} d\tau \\
		&\overset{\eqref{delta}}{\leq} e^{-\lambda(r)} \Delta,
	\end{align*}
	for $r\in[R_{N-2},R_{N-1}]$. From this, we conclude again with the bound  
	$$
	p(r)=\frac{\rho_{N-1}\phi_{N-1}(r)}{\rho_{N-1}-\phi_{N-1}(r)}
	\leq \frac{\Delta}{e^{\lambda(r)}-\Delta}\rho_{N-1}
	$$
	on $[R_{N-2},R_{N-1}]$. 
	
To implement the induction step, we assume 
\begin{equation}  \label{lemma1_techeqn4}
\frac{\phi_k(R_k)}{\rho_k}\leq e^{-\lambda(R_k)} \int^{R_{N}}_{R_k} \frac{m(\tau)}{\tau^2} e^{3\lambda(\tau)} d\tau.
\end{equation}
By substitution, we then obtain
	\begin{align} \label{lemma1_techeqn5}
		\frac{\phi_k(r) }{\rho_k}
		&\overset{\eqref{exact1}}{=} 
		e^{-\lambda(r)}
		\left( e^{\lambda(R_k)}   \frac{\phi_k(R_k)  }{\rho_k }+	\int^{R_k}_r \frac{m(\tau)}{\tau^2}  e^{3\lambda(\tau)} d\tau	\right) \cr 
		&\overset{\eqref{lemma1_techeqn4}}{\leq}
		e^{-\lambda(r)}
		\left(
		\int^{R_{N}}_{R_{k}} \frac{m(\tau)}{\tau^2} e^{3\lambda(\tau)} d\tau + \int^{R_{k}}_r \frac{m(\tau)}{\tau^2} e^{3\lambda(\tau)} d\tau 
		\right) \cr 
		&=e^{-\lambda(r)}
		\int^{R_{N}}_r \frac{m(\tau)}{\tau^2} e^{3\lambda(\tau)} d\tau  \cr 
		& \leq e^{-\lambda(r)} \Delta,
	\end{align}
	for $r \in [R_{k-1}, R_k]$. From this, we obtain again for $r \in [R_{k-1}, R_k]$ the pressure bound 
	\begin{equation} \label{lemma1_techeqn6}
	p(r) = \frac{\rho_{k}\phi_{k}(r)}{\rho_{k}-\phi_{k}(r)} \leq \frac{\Delta}{e^{\lambda(r)}-\Delta}\rho_{k} .
	\end{equation}
	Moreover, the jump condition \eqref{jump} at $r=R_{k-1}$ yields
	\[
	\frac{\phi_{k-1}(R_{k-1})}{\rho_{k-1}}\leq \frac{\phi_{k}(R_{k-1})}{\rho_{k}}
	\overset{\eqref{exact1}}{=} e^{-\lambda(R_{k-1})}
	\int^{R_{N}}_{R_{k-1}} \frac{m(\tau)}{\tau^2} e^{3\lambda(\tau)} d\tau,
	\]	
	which verifies the induction assumption \eqref{lemma1_techeqn4} at the next level. We thus conclude that \eqref{lemma1_techeqn4}, \eqref{lemma1_techeqn5} as well as \eqref{lemma1_techeqn6} hold for all $k \in \{1,...,N\}$. Moreover, the pressure function $p(r)$ defined by \eqref{pressure} satisfies \eqref{bound_p_Sec5}, as claimed. This completes the proof.
\end{proof}

In the next lemma, we show that any bounded density function with at most finitely many jump discontinuities, satisfying \eqref{delta}, can be approximated by staircase density functions, each satisfying \eqref{delta}.

\begin{Lemma}\label{lemma2}
Suppose $\rho(r)$ is a non-increasing bounded function defined on $[0, R]$ with at most finitely many jump discontinuities (for which both left- and right-sided limits exist), such that \eqref{delta} holds. Then there exists a sequence of staircase density functions $\{\rho^{(n)}(r)\}$ of form \eqref{density}, that converges to $\rho(r)$ in $L^{\infty}([0, R])$. Moreover, for $m^{(n)}(r)=\int_{0}^{r} 4\pi \tau^2 \rho^{(n)}(\tau) d\tau$ and $\rho^{(n)}(\tau)$ as in \eqref{phiN_rhoN}, $r\in[0,R]$, conditions \eqref{Outside_Schwschild_Sec5} and \eqref{delta} hold in the sense that
\begin{equation} \label{approx}
	1 - \frac{2m^{(n)}(r)}{r} > K^2
	\hspace{1cm} \text{and} \hspace{1cm}
	\int^{R}_0 \frac{m^{(n)}(r)}{r^2}  \left(1-\frac{2m^{(n)}(r)}{r}\right)^{-\frac{3}{2}} dr \ \leq \ \Delta.
\end{equation}
\end{Lemma}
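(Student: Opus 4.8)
The plan is to construct each $\rho^{(n)}$ as a step function that \emph{under-estimates} $\rho$ pointwise, because this single feature makes both conditions in \eqref{approx} follow automatically from \eqref{delta} by monotonicity.

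First I would build the approximation. Let $0=s_0<s_1<\cdots<s_\ell=R$ collect all of the (finitely many) jump points of $\rho$ together with $0$ and $R$; on each open interval $(s_{j-1},s_j)$ the function $\rho$ is continuous and monotone, and since a monotone function has one-sided limits everywhere, $\rho$ extends continuously — hence uniformly continuously — to $[s_{j-1},s_j]$. Given $n$, I refine each $(s_{j-1},s_j)$ by a partition $s_{j-1}=t_0<\cdots<t_k=s_j$ fine enough that the oscillation of $\rho$ on every $(t_{p-1},t_p)$ is below $1/n$, and set $\rho^{(n)}(r)=\rho(t_p^-)=\inf_{(t_{p-1},t_p)}\rho$ for $r\in[t_{p-1},t_p)$. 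Assembling these pieces over all $j$ produces a non-increasing step function with $\rho^{(n)}\le\rho$ off the finitely many partition points and $\|\rho^{(n)}-\rho\|_{L^\infty([0,R])}\le 1/n$; those partition points form a null set and play no role in $L^\infty$-norms or in any integral below. When $\rho(R^-)>0$ this $\rho^{(n)}$ already has the form \eqref{density} with positive constants and stellar radius $R$; when $\rho(R^-)=0$ I would additionally truncate $\rho^{(n)}$ to vacuum on the shrinking tail where $\rho<1/n$, which still leaves a staircase of form \eqref{density} (now with stellar radius $R_N^{(n)}\nearrow R$), still $\le\rho$ and still $L^\infty$-convergent.

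Next, from $\rho^{(n)}\le\rho$ a.e. and \eqref{TOV2} I get $m^{(n)}(r)=\int_0^r 4\pi\tau^2\rho^{(n)}(\tau)\,d\tau\le\int_0^r 4\pi\tau^2\rho(\tau)\,d\tau=m(r)$ for every $r\in[0,R]$; in particular $m^{(n)}(r)\le m(r)<r/2$ by the standing assumption \eqref{Outside_Schwschild_Sec5}, so everything below is well defined and
\[
1-\frac{2m^{(n)}(r)}{r}\ \ge\ 1-\frac{2m(r)}{r}\ >\ K^2 ,
\]
which is the first claim in \eqref{approx}. For the second claim I note that, for fixed $r>0$, the map $x\mapsto \frac{x}{r^2}\bigl(1-\frac{2x}{r}\bigr)^{-3/2}$ is strictly increasing on $[0,r/2)$ — its $x$-derivative equals $\frac{1}{r^2}(1-\frac{2x}{r})^{-3/2}+\frac{3x}{r^3}(1-\frac{2x}{r})^{-5/2}>0$ — so evaluating at $x=m^{(n)}(r)\le m(r)$ and integrating in $r$ gives
\[
\int_0^R \frac{m^{(n)}(r)}{r^2}\Bigl(1-\frac{2m^{(n)}(r)}{r}\Bigr)^{-3/2}\!dr\ \le\ \int_0^R \frac{m(r)}{r^2}\Bigl(1-\frac{2m(r)}{r}\Bigr)^{-3/2}\!dr\ \le\ \Delta
\]
by \eqref{delta}, which is the second claim in \eqref{approx}.

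The only genuinely delicate step is the construction: one has to match the jumps of $\rho$ exactly so that $L^\infty$-convergence survives the discontinuities, keep the result non-increasing and literally of the form \eqref{density} with positive heights (which forces the tail treatment near $r=R$ when $\rho(R^-)=0$), and — the key point — pick the step heights \emph{below} $\rho$ rather than above, since this is precisely what yields $m^{(n)}\le m$ and hence lets the monotonicity of $\rho\mapsto m$ and of $x\mapsto \frac{x}{r^2}(1-\frac{2x}{r})^{-3/2}$ transport both constraints from $\rho$ to every $\rho^{(n)}$.
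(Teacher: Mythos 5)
Your proposal is correct and follows essentially the same route as the paper: approximate $\rho$ by the staircase function taking the left-limit (infimum) value on each cell of a refining partition that contains all jump points, obtain $L^\infty$-convergence from uniform continuity on the closed subintervals, and transfer both conditions in \eqref{approx} from $\rho$ to $\rho^{(n)}$ via $\rho^{(n)}\le\rho$, $m^{(n)}\le m$ and the monotonicity of the integrand in $m$. Your explicit derivative check of $x\mapsto \frac{x}{r^2}(1-\frac{2x}{r})^{-3/2}$ and your treatment of the tail when $\rho(R^-)=0$ are slightly more careful than the paper's argument, but do not change the substance.
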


\begin{proof}
We now develop an explicit approximation scheme based on halving subintervals of $[0,R]$ consecutively. For this let $R^{(0)}_j$ denote the position of the jump discontinuities of $\rho(r)$ in $[0,R]$, for  $j=1,\cdots,J$ finite, which gives a partition $\mathcal{P}_0: 0=R^{(0)}_0<R^{(0)}_1<\cdots<R^{(0)}_{J+1}=R$ of $[0,R]$. Clearly $\rho(r)$ is continuous on the sub-intervals $(R^{(0)}_{j-1},R^{(0)}_{j})$ for each $j=1,\cdots,J+1$, and both right-sided limits $\rho(R^{(0)}_{j-1} +)=\lim_{r\searrow R^{(0)}_{j-1}}\rho(r)$ and left-sided limits $\rho(R^{(0)}_{j} -)=\lim_{r\nearrow R^{(0)}_{j}}$ exist by assumption. 	

To construct a sequence of approximating staircase density functions, we begin by adding midpoints $\frac{1}{2}(R^{(0)}_{j-1} + R^{(0)}_{j})$ for each $j=1,\cdots,J,J+1$ to the initial partition $\mathcal{P}_0$,  which yields a refined partition $\mathcal{P}_1: 0=R^{(1)}_0<R^{(1)}_1<\cdots<R^{(1)}_{2J+2}=R$. We now define the first associated staircase density function on $[0,R]$ by setting
	\begin{equation}
		\rho^{(1)}(r) = \rho(R^{(1)}_j -), \text{ if } r\in[R^{(1)}_{j-1}, R^{(1)}_{j}) \text{ for some } j\in \{1,\cdots,2J+2\},
	\end{equation}
	and by setting $\rho^{(1)}(r)=\rho(R^{(1)}_{2J+2}-)$, if $r=R$.
	
We now repeat the procedure of adding midpoints to each subinterval of each partition iteratively. That is, for each interval $(R^{(n-1)}_{j-1} , R^{(n-1)}_{j})$ of the partition $\mathcal{P}_{n-1}$, we add the midpoint $\frac{1}{2}(R^{(n-1)}_{j-1} + R^{(n-1)}_{j})$, to obtain a new refined partition $\mathcal{P}_{n}$. After n-th iterations, the process gives rise to a partition 
$$
\mathcal{P}_n: 0=R^{(n)}_0<R^{(n)}_1<\cdots<R^{(n)}_{2^n( J+ 1)}=R
$$ 
and a staircase density function on $[0,R]$ defined by
	\begin{equation}\label{partition}
		\rho^{(n)}(r) = \rho(R^{(n)}_j -), \text{   if } r\in[R^{(n)}_{j-1}, R^{(n)}_{j}) \text{ for some } j\in \big\{1,\cdots, 2^n( J+ 1)\big\},
	\end{equation}
	and $\rho^{(n)}(r)=\rho(R^{(n)}_{2^n( J+ 1)}-)$, if $r=R$.
	
Next, we prove the sequence of staircase density functions $\rho^{(n)}(r)$ constructed in \eqref{partition} convergences uniformly to $\rho(r)$ almost everywhere. For this, we extend $\rho(r)$ continuously to the closed intervals $[R_{j-1}^{(0)},R_{j}^{(0)}]$. Then, by the Heine--Cantor Theorem \cite{Rudin}, for each $j = 1,..,J+1$ the so extended density function $\rho$ is uniformly continuous on $[R^{(0)}_{j-1},R^{(0)}_{j}]$, in the sense that for any $\varepsilon>0$, there exists $\delta_j>0$ such that 
\begin{equation} \label{uniform_convergence}
|\rho(x)-\rho(y)| < \varepsilon \ \ \ \ \ \forall x,y\in [R^{(0)}_{j-1},R^{(0)}_{j}] \ \ \ \text{with} \  \ \ |x-y|<\delta_j.
\end{equation}
Taking $\delta \equiv \min \{\delta_j \mid j=1,..., J+1\}$, we choose some $N\in\mathbb{N}$ such that 
\[
\max_{1\leq j\leq 2^N (J+1)} |R^{(N)}_{j}- R^{(N)}_{j-1}| <\delta .
\]
Then, for any $n >N$ and for any $r \in [0,R]$ with $r \in [R^{(n)}_{j-1}, R^{(n)}_{j})$, it follows that
\begin{equation} \label{lemma2_sup-norm}
|\rho(r)-\rho^{(n)}(r)| \overset{\eqref{partition}}{=} |\rho(r)-\rho(R_j^{(n)}-)| \overset{\eqref{uniform_convergence}}{<} \varepsilon .
\end{equation}
Thus, since the above choice of $n>N$ is independent of $r \in [0,R]$, we conclude that 
\[
\|\rho-\rho^{(n)}\|_{L^\infty([0,R])} 
= \max_{j=1,...,2^n(J+1)} \|\rho-\rho^{(n)}\|_{L^\infty([R^{(n)}_{j-1}, R^{(n)}_{j}))} 
\overset{\eqref{lemma2_sup-norm}}{<} \varepsilon.
\] 
This implies that $\rho^{(n)}(r)$ converges to $\rho(r)$ in $L^\infty([0,R])$ as $n \to \infty$, just as claimed.

Finally, to prove \eqref{approx}, observe that by \eqref{partition} we have $\rho^{(n)}(r)\leq\rho(r)$ for almost everywhere $r\in[0,R]$. Thus, since inequalities are preserved by integration, we obtain 
$$
m^{(n)}(r)=\int_{0}^{r} 4\pi \tau^2 \rho^{(n)}(\tau) d\tau\leq \int_{0}^{r} 4\pi \tau^2 \rho(\tau) d\tau=m(r)
$$ 
and 
$$
e^{\lambda_n(r)}= \left(1-\frac{2m^{(n)}(r)}{r}\right)^{-\frac{1}{2}} \leq \left(1-\frac{2m(r)}{r}\right)^{-\frac{1}{2}} = e^{\lambda(r)},
$$ 
from which we deduced the sought-after inequality 
$$ 
1 - \frac{2m^{(n)}(r)}{r} \geq 1 - \frac{2m(r)}{r} > K^2    
$$
and
$$
\int^{R}_0 \frac{m^{(n)}(r)}{r^2}  e^{3\lambda_n(r)} dr \leq 	\int^{R}_0 \frac{m(r)}{r^2}  e^{3\lambda(r)} dr \leq \Delta.
$$
This shows that each $\rho^{(n)}(r)$ satisfies \eqref{approx}, which completes the proof.
\end{proof}

To obtain the convergence of the pressure $p^{(n)}(r)$, we show the convergence of $\phi^{(n)}(r)$ as an intermediate step.

\begin{Lemma}\label{lemma3}
Assume $\rho(r)$ is a density functions as in Lemma \ref{lemma2}, and assume $\{\rho^{(n)}(r)\}$ is the corresponding sequence of staircase density functions as defined in \eqref{partition}, such that $\{\rho^{(n)}(r)\}$ converges to $\rho(r)$ in $L^{\infty}([0, R])$. For each $\rho^{(n)}(r)$, let $\phi^{(n)}(r)$ be the solution of the of the piece-wise linearized TOV-equation \eqref{linearODE} associated to $\rho^{(n)}(r)$, as defined in \eqref{phiN_rhoN}.\footnote{More precisely, we take $\phi^{(n)}(r)\equiv \phi(r)$ for $\phi(r)$ as in \eqref{phiN_rhoN} with $N=2^n(J+1)$ and $\rho^{(n)}(r)$ the staircase density function defined in \eqref{partition}, and where $J$ is the number of discontinuities in $\rho(r)$.} Then the sequence $\{\phi^{(n)}(r)\}$ converges in $L^{\infty}([0, R])$.
\end{Lemma}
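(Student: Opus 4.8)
The plan is to combine the explicit piecewise representation of $\phi^{(n)}$ from Section~\ref{Sec_staircase-density} with the monotonicity of the approximation scheme. The key observations are: (i) each $\phi^{(n)}$ is non-increasing in $r$; (ii) the whole sequence $\{\phi^{(n)}\}$ is monotone in $n$, so it converges pointwise to $\phi^\ast:=\sup_n\phi^{(n)}$; and (iii) a uniform-convergence criterion for monotone sequences, applied on the intervals between the finitely many prescribed jumps of $\rho$, upgrades this to $L^\infty$-convergence. (Throughout we use $\rho>0$ on $[0,R)$ as in Theorem~\ref{Thm_existence}, so the harmonic-mean transformation is non-degenerate.)

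\textbf{Preliminaries and a global representation.} From Lemma~\ref{lemma1}, $0\le\phi^{(n)}(r)\le\Delta\,\rho^{(n)}(r)\,e^{-\lambda_n(r)}\le\Delta\|\rho\|_{L^\infty}$, and $\sigma^{(n)}:=\phi^{(n)}/\rho^{(n)}=p^{(n)}/(p^{(n)}+\rho^{(n)})$ satisfies $0\le e^{\lambda_n}\sigma^{(n)}\le\Delta$ by \eqref{lemma1_techeqn5}. From Lemma~\ref{lemma2}, $\rho^{(n)}\to\rho$ in $L^\infty$, hence $m^{(n)}\to m$ and $e^{\pm\lambda_n}\to e^{\pm\lambda}$ uniformly on $[0,R]$ (using \eqref{Outside_Schwarzschild}, \eqref{approx} to keep $1-2m^{(n)}/r\ge K^2$), and $m^{(n)}(r)/r^2\le\tfrac{4\pi}{3}\|\rho\|_{L^\infty}r$ stays bounded near $r=0$, so $F_n:=\tfrac{m^{(n)}\rho^{(n)}}{r^2}e^{3\lambda_n}\to F:=\tfrac{m\rho}{r^2}e^{3\lambda}$ in $L^\infty([0,R])$. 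Integrating the integrating-factor form of \eqref{linearODE} across each subinterval and inserting the jump $\phi^{(n)}(R^{(n)}_k-)-\phi^{(n)}(R^{(n)}_k+)=\sigma^{(n)}(R^{(n)}_k-)\,\sigma^{(n)}(R^{(n)}_k+)\big(\rho^{(n)}(R^{(n)}_k-)-\rho^{(n)}(R^{(n)}_k+)\big)\ge0$, which is just \eqref{jump}, shows that $\Psi^{(n)}:=e^{\lambda_n}\phi^{(n)}$ is non-increasing on $[0,R]$ with $\Psi^{(n)}(R)=0$. Hence $\phi^{(n)}=e^{-\lambda_n}\Psi^{(n)}$ is non-increasing in $r$, and for $r_1<r_2$ the bound $e^{\lambda_n}\sigma^{(n)}\le\Delta$ (which makes each jump term $\le\Delta^2\big(\rho^{(n)}(R^{(n)}_k-)-\rho^{(n)}(R^{(n)}_k+)\big)$) together with the uniform Lipschitz bound on $e^{-\lambda_n}$ yields the oscillation estimate $\phi^{(n)}(r_1)-\phi^{(n)}(r_2)\le C(r_2-r_1)+\Delta^2\big(\rho(r_1-)-\rho(r_2-)\big)$ with $C$ independent of $n$.

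\textbf{Monotonicity in $n$ and pointwise convergence.} In the construction of Lemma~\ref{lemma2}, each subinterval is assigned the left limit of $\rho$ at its right endpoint, and refining halves subintervals; since $\rho$ is non-increasing this gives $\rho^{(n)}\le\rho^{(n+1)}$ pointwise, hence $m^{(n)}\le m^{(n+1)}$. The right-hand side of the TOV-equation \eqref{TOV3}, written as $-\tfrac{(\rho+p)(m+4\pi r^3p)}{r^2-2mr}$, is for fixed $r$ and $p\ge0$ decreasing in both $\rho$ and $m$. Integrating inward from $r=R$, where $p^{(n)}(R)=p^{(n+1)}(R)=0$, a comparison argument for the (Carath\'eodory) TOV-ODE then gives $p^{(n)}\le p^{(n+1)}$ on $[0,R]$. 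Since $(x,y)\mapsto xy/(x+y)$ is increasing in each variable, $\phi^{(n)}=p^{(n)}\rho^{(n)}/(p^{(n)}+\rho^{(n)})$ is non-decreasing in $n$; being bounded by $\Delta\|\rho\|_{L^\infty}$, it converges pointwise on $[0,R]$ to $\phi^\ast=\sup_n\phi^{(n)}$, which is non-increasing in $r$.

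\textbf{Upgrading to $L^\infty$.} Passing $n\to\infty$ in the oscillation estimate shows $\phi^\ast$ is continuous on $[0,R]$ away from the finitely many jump points $R^{(0)}_1,\dots,R^{(0)}_J$ of $\rho$, which lie in every partition $\mathcal{P}_n$. On each closed interval $[R^{(0)}_{l-1},R^{(0)}_l]$ (with functions extended by their one-sided limits at the endpoints) the $\phi^{(n)}$ are monotone in $r$ and converge pointwise to the continuous function $\phi^\ast$, so by P\'olya's theorem (uniform convergence of pointwise-convergent monotone sequences with continuous limit) they converge uniformly there; patching over $l=1,\dots,J+1$ gives $\phi^{(n)}\to\phi^\ast$ in $L^\infty([0,R])$. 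The main obstacle is the monotonicity step: making the comparison principle for the Riccati-type \eqref{TOV3} rigorous across the finitely many corners of $\rho^{(n)}$, where the right-hand side is merely measurable in $r$ but is locally Lipschitz in $p$ on the region determined by \eqref{Outside_Schwarzschild} and the uniform pressure bound, so that the Carath\'eodory comparison theorem applies. A secondary delicate point — controlled by the uniform oscillation estimate — is that $\phi^\ast$ acquires no discontinuities away from the $J$ prescribed jumps, which is exactly what permits P\'olya's theorem to be invoked piecewise.
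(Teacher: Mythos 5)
Your argument is correct in outline, but it takes a genuinely different route from the paper. The paper proves directly that $\{\phi^{(n)}\}$ is Cauchy in $L^\infty$: writing $e^{\lambda_n}\phi^{(n)}$ in the integrated form \eqref{lemma3_techeqn7} on a common refinement, it estimates the difference of the integral terms via the $L^\infty$-closeness of $\rho^{(i)},\rho^{(j)}$, $m^{(i)},m^{(j)}$ and $e^{3\lambda_i},e^{3\lambda_j}$, and propagates the boundary value inward interface by interface through the jump relation \eqref{jump}, obtaining a quantitative bound $\|\phi^{(i)}-\phi^{(j)}\|_{L^\infty}\leq C\varepsilon$. You instead exploit the specific structure of the approximation scheme of Lemma \ref{lemma2}: since refinement assigns left limits of a non-increasing $\rho$, the approximants are ordered, $\rho^{(n)}\leq\rho^{(n+1)}\leq\rho$, and a Carath\'eodory comparison principle for the TOV equation integrated inward from $p(R)=0$ (the right-hand side being decreasing in $(\rho,m)$ for $p\geq 0$ and locally Lipschitz in $p$ on the region fixed by \eqref{Outside_Schwarzschild} and the bound \eqref{bound_p_Sec5}) yields $p^{(n)}\leq p^{(n+1)}$, hence $\phi^{(n)}\nearrow\phi^{\ast}$ pointwise; a P\'olya-type theorem for monotone-in-$r$ functions with continuous limit, applied between the $J$ prescribed jumps, upgrades this to uniform convergence. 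What each approach buys: the paper's Cauchy argument needs no ordering of the approximants and gives an explicit rate tied to $\|\rho-\rho^{(n)}\|_{L^\infty}$, whereas yours is conceptually cleaner but is tied to the nested, monotone construction of \eqref{partition} and to the monotonicity of $\rho$, and requires the extra comparison machinery.

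Two repairable points. First, the step ``$\Psi^{(n)}=e^{\lambda_n}\phi^{(n)}$ is non-increasing, hence $\phi^{(n)}=e^{-\lambda_n}\Psi^{(n)}$ is non-increasing'' is not valid as stated, because $e^{-\lambda_n(r)}=\bigl(1-\tfrac{2m^{(n)}(r)}{r}\bigr)^{1/2}$ is in general not monotone in $r$; the conclusion is nevertheless true and follows immediately from the fact that $p^{(n)}$ and $\rho^{(n)}$ are both non-increasing and the harmonic mean \eqref{trans} is increasing in each argument (alternatively, apply P\'olya's theorem to $\Psi^{(n)}$, which is monotone in both $r$ and $n$, and then multiply by $e^{-\lambda_n}\to e^{-\lambda}$ uniformly). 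Second, at the endpoints of the subintervals $[R^{(0)}_{l-1},R^{(0)}_l]$ you must check that the one-sided limits of $\phi^{(n)}$ converge to the corresponding one-sided limits of $\phi^{\ast}$ (a $\sup_n\inf_r$ versus $\inf_r\sup_n$ interchange); your uniform oscillation estimate does control this, but it should be said explicitly before invoking P\'olya piecewise.
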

\begin{proof}
	By convergence of $\{\rho^{(n)}(r)\}$ in $L^{\infty}([0, R])$, it follows that for any $\varepsilon>0$, there exists an $N \in \mathbb{N}$ such that 
	\begin{equation}\label{lemma3_techeqn1}
		|\rho^{(i)} (r) - \rho^{(j)} (r)|<\varepsilon, \quad \forall i,j>N,
	\end{equation}
	for almost everywhere $r\in[0,R]$, (so $N$ is independent of $r$).
	From \eqref{lemma3_techeqn1} we obtain for any $i,j >N$ that
	\begin{equation}\label{lemma3_techeqn2}
		|m^{(i)}(r)-m^{(j)}(r)| 
		= 
		\left|\int_0^r 4\pi \tau^2 (\rho^{(i)}(\tau)-\rho^{(j)}(\tau) )d\tau\right| \\
		\leq
		\int_0^r 4\pi \tau^2 |\rho^{(i)}(\tau)-\rho^{(j)}(\tau)| d\tau \\
		\leq
		\frac{4}{3}\pi r^3 \varepsilon.
	\end{equation}
	For the metric component $e^{-2\lambda_n}=1-\frac{2m^{(n)}(r)}{r}$, we find that
	\begin{equation}\label{lemma3_techeqn3}
			|e^{-2\lambda_i } - e^{-2\lambda_j }|
			= 
			\frac{2}{r}\left| m^{(i)}(r)-m^{(j)}(r) \right| 
			\overset{\eqref{lemma3_techeqn2}}{\leq} 
			\frac{8}{3}\pi r^2 \varepsilon
			\leq
			\frac{8}{3}\pi R^2 \varepsilon
	\end{equation}
	and, since by \eqref{approx} we have $e^{\lambda_n} < K^{-1}$, we get in addition
	\begin{equation}\label{lemma3_techeqn4}
		| e^{2\lambda_i } - e^{2\lambda_j } |
		= 
		\left|e^{2\lambda_i}e^{2\lambda_j}\right| \left|e^{-2\lambda_i}-e^{-2\lambda_j}\right| 
		\overset{\eqref{approx}}{\leq} 
		\frac{1}{K^4} \left|e^{-2\lambda_i}-e^{-2\lambda_j}\right| 
		\overset{\eqref{lemma3_techeqn3}}{\leq} 
		\frac{8\pi R^2}{3 K^4} \varepsilon.
	\end{equation}
	Since $e^{\lambda_n(r)}\geq 1$ for all $r\in[0,R]$, we obtain 
	\begin{equation}\label{lemma3_techeqn5}
		| e^{\lambda_i} - e^{\lambda_j} | 
		=
		\left|\frac{e^{2\lambda_i}-e^{2\lambda_j}}{e^{\lambda_i} + e^{\lambda_j} }\right| 
		\leq
		\frac{1}{2} \left|e^{2\lambda_i} - e^{2\lambda_j}\right|
		\overset{\eqref{lemma3_techeqn4} }{\leq}
		\frac{4\pi R^2}{3 K^4} \varepsilon.
	\end{equation}
	Combining now \eqref{lemma3_techeqn4} and \eqref{lemma3_techeqn5}, we get
	\begin{align}\label{lemma3_techeqn6}
		|e^{3\lambda_i} - e^{3\lambda_j}|
		= 
		|e^{\lambda_i}e^{2\lambda_i} - e^{\lambda_j} e^{2\lambda_j}| 
		&\leq
		|e^{\lambda_i}| |e^{2\lambda_i} - e^{2\lambda_j}| + |e^{2\lambda_j}| | e^{\lambda_i} - e^{\lambda_j} |  \cr
		&{\leq}
		\frac{1}{K} |e^{2\lambda_i} - e^{2\lambda_j}| + \frac{1}{K^2}| e^{\lambda_i} - e^{\lambda_j} |  
		\ \leq \
		\frac{4\pi R^2 }{K^6} \varepsilon,
	\end{align}
	keeping in mind that $0<K<1$.

The goal now is to prove convergence of $\{\phi^{(n)}(r)\}$ in $L^\infty([0,R])$. 
For this, consider two such linear solutions $\phi^{(i)}(r)$ and $\phi^{(j)}(r)$. We then denote the common refinement of partitions corresponding to $\rho^{(i)}(r)$ and $\rho^{(j)}(r)$ as $0=R_0<R_1<\cdots<R_k=R$, (that is, if $j\geq i$, then $k=2^j(J+1)$). One can show that both $\phi^{(i)}(r)$ and $\phi^{(j)}(r)$ satisfy, for any $r\in[R_{l-1},R_l]$, any $l=1,\cdots,k$, that  
	\begin{align}\label{lemma3_techeqn7}
	e^{\lambda_n(r)} 	\phi^{(n)} (r) 
	= 
	\int^{R_l}_{r} \frac{m^{(n)}(\tau)}{\tau^2}\rho^{(n)}(\tau)e^{3\lambda_n(\tau)} d\tau  + e^{\lambda_n(R_l)} \phi^{(n)} (R_l) 
	=: f^{(n)}_l(r)  .
	\end{align}
To prove convergence of $\{ \phi^{(n)} (r) \}$ in $L^\infty([0,R])$, it suffices to prove convergence of $\{ f^{(n)}_l \}$ in $L^\infty([0,R])$ for each $l=1,...,k$, since
	\begin{align} \label{lemma3_techeqn7'}
	\|\phi^{(i)}-\phi^{(j)}\|_{L^{\infty}([0,R])}
	= &
	\max_{1\leq l \leq k} \| e^{-\lambda_i} f^{(i)}_l  -  e^{-\lambda_j} f^{(j)}_l \|_{L^{\infty}([R_{l-1},R_{l}])} 	\cr 
	\leq &
	\max_{1\leq l \leq k}	\| f^{(i)}_l - 	f^{(j)}_l \|_{L^{\infty}([R_{l-1},R_{l}])}
	+
	\|f^{(i)}_l\|_{L^{\infty}([0,R])} \;  \| e^{-\lambda_i} - e^{-\lambda_j}\|_{L^{\infty}([0,R])},  \ \ \ \ \ \ \
	\end{align}
which, by \eqref{lemma3_techeqn3}, converges to zero as $i,j \to \infty$, once the convergence of $\{ f^{(n)}_l \}$ has been established. 

To prove this convergence, we estimate both terms on the right hand side of \eqref{lemma3_techeqn7} separately, starting with the first one. We first write                 
	\begin{align} \nonumber
		&\left| \int^{R_l}_{r} \frac{m^{(i)}}{\tau^2}\rho^{(i)}e^{3\lambda_i} d\tau
		-\int^{R_l}_{r} \frac{m^{(j)}}{\tau^2}\rho^{(j)}e^{3\lambda_j} d\tau\right| 
		\leq
		\int^{R_l}_{r}  
		\left|	
		\frac{m^{(i)}}{\tau^2}\rho^{(i)}e^{3\lambda_i} 
		-
		\frac{m^{(j)}}{\tau^2}\rho^{(j)}e^{3\lambda_j} 
		\right|
		d\tau \cr
		=&
		\int^{R_l}_{r}  \left| \tfrac{\rho^{(i)}e^{3\lambda_i(r)}}{r^2} \Big(m^{(i)}(r)-m^{(j)}(r)\Big)  
		+ \tfrac{\rho^{(i)}m^{(j)}(r)}{r^2} \Big(e^{3\lambda_i(r)} - e^{3\lambda_j(r)}\Big)
		+ \tfrac{m^{(j)}(r)e^{3\lambda_j(r)}}{r^2} \Big(\rho^{(i)}(r)-\rho^{(j)}(r)\Big)  \right|dr ,
	\end{align}
and substituting \eqref{lemma3_techeqn2} into the first term, and substituting \eqref{lemma3_techeqn6} and \eqref{lemma3_techeqn1} together with the bound $|m^{(i)}(r)| \leq C r^3$ into the second and third term, we obtain in terms of a universal constant $C>0$ that
	\begin{align} \label{lemma3_techeqn8}
		\left| \int^{R_l}_{r} \frac{m^{(i)}}{\tau^2}\rho^{(i)}e^{3\lambda_i} d\tau
		-\int^{R_l}_{r} \frac{m^{(j)}}{\tau^2}\rho^{(j)}e^{3\lambda_j} d\tau\right|
		\leq & \ C \int^{R_l}_{0}  
		\left(
		\frac{4}{3}\pi R 
		+ \frac{4\pi R^2}{K^6}
		+ 1
		\right)\varepsilon  dr
		\ \leq \ 
		C \varepsilon,
	\end{align}	
which is the sought-after estimate on the first term.

We estimate the second term in \eqref{lemma3_techeqn7} by induction.
	When $l=k$, the boundary condition $p(R_k)=0$ implies $\phi^{(n)} (R_k)=0$.
	Hence $f^{(n)}_k (r)$ is uniformly convergent on $[R_{k-1},R_k]$ and $|f^{(i)}_k (R_{k-1})-f^{(j)}_k (R_{k-1})| \leq C\varepsilon$ by \eqref{lemma3_techeqn7} and \eqref{lemma3_techeqn8}. Now we prove $f^{(n)}_{l-1} (r)$ is uniformly convergent on $[R_{l-2},R_{l-1}]$ by induction, assuming that $f^{(n)}_{l} (r)$ is uniformly convergent on $[R_{l-1},R_{l}]$. Note that $\phi^{(n)}(r)$ is discontinuous at $r=R_{l-1}$, so the Rankine--Hugoniot conditions, denoting the jump in a function $f$ across $R_l$ by $[f]_l\equiv f(R_l^+) - f(R_l^-)$,  read
	\begin{equation}\label{lemma3_techeqn9}
		\left[	\frac{1}{\phi^{(n)}} \right]_{l-1} = \left[\frac{1}{\rho^{(n)}}\right]_{l-1}.
	\end{equation}
	Multiplying both sides of \eqref{lemma3_techeqn9} by $e^{-\lambda_n(R_{l-1})}$ yields
	\begin{equation}\label{lemma3_techeqn10}
		\frac{1}{f_{l-1}^{(n)}(R_{l-1})} 
		=
		 \frac{1}{f_l^{(n)}(R_{l -1})}
		 -
		 e^{-\lambda_n(R_{l-1})}
		 \left[\frac{1}{\rho^{(i)}}\right]_{l-1}.
	\end{equation}
	By our induction assumption on uniform convergence of $f^{(n)}_{l}(r)$ for $r \in [R_{l-1},R_{l}]$, we have
	\begin{align*}
		\left|\frac{1}{f_l^{(i)}(R_{l-1})}-\frac{1}{f_l^{(j)}(R_{l-1})}\right| 
		= 
		\frac{|f_l^{(i)}(R_{l-1}) - f_l^{(j)}(R_{l-1})|}{f_l^{(i)}(R_{l-1})\, f_l^{(j)}(R_{l-1})} \leq C \varepsilon,
	\end{align*}
and by uniform convergence of $\{\rho^{(n)}(r)\}$, we have
	\begin{align*}
		\left|
		\left[\frac{1}{\rho^{(i)}}\right]_{l-1}-\left[\frac{1}{\rho^{(j)}}\right]_{l-1} 
		\right|
		\leq
		\frac{\left|\rho^{(i)}(R_{l-1}^+) -\rho^{(j)}(R_{l-1}^+)\right|}{\rho^{(i)}(R_{l-1}^+) \rho^{(j)}(R_{l-1}^+)}
		+\frac{\left|\rho^{(j)}(R_{l-1}^-)- \rho^{(i)}(R_{l-1}^-)\right|}{\rho^{(j)}(R_{l-1}^-) \rho^{(i)}(R_{l-1}^-)}
		\leq  C \varepsilon,
	\end{align*}
	and combining the above we obtain
	\begin{align}\label{lemma3_techeqn11}
		 |f_{l-1}^{(i)}(R_{l-1})&-f_{l-1}^{(j)}(R_{l-1})| 
		 =  \left|f_{l-1}^{(i)}(R_{l-1}) \right|  \left|f_{l-1}^{(j)}(R_{l-1})\right| 
		\left| \frac{1}{f_{l-1}^{(i)}(R_{l-1})} -\frac{1}{f_{l-1}^{(j)}(R_{l-1})} \right| 
		\cr
		\overset{\eqref{lemma3_techeqn10}}{\leq} &
		C_1 
		\left|\frac{1}{f_l^{(i)}(R_{l-1})}-\frac{1}{f_l^{(j)}(R_{l-1})}\right| 
		+
		C_2	
		| e^{-\lambda_i (r)} - e^{-\lambda_j (r)} | 
		+
		C_3
		\left|
		\left[\frac{1}{\rho^{(i)}}\right]_{l-1}-\left[\frac{1}{\rho^{(j)}}\right]_{l-1}
		\right| 
		\cr
		\leq &  C \varepsilon.
	\end{align}
	Combining now \eqref{lemma3_techeqn8} and \eqref{lemma3_techeqn11} yields
	\begin{align*}
		|f^{(i)}_{l-1} (r) - 	f^{(j)}_{l-1}(r)| 
		\leq
		|f_{l-1}^{(i)}(R_{l-1})-f_{l-1}^{(j)}(R_{l-1})| 
		+
		\left|
		\int^{R_{l-1}}_{r} \frac{m^{(i)}}{\tau^2}\rho^{(i)}e^{3\lambda_i} d\tau
		-\int^{R_{l-1}}_{r} \frac{m^{(j)}}{\tau^2}\rho^{(j)}e^{3\lambda_j} d\tau\right| 
		\leq C \varepsilon,
	\end{align*}
	which implies the sought-after uniform convergence of $f^{(n)}_{l-1} (r)$ on $[R_{l-2},R_{l-1}]$. Applying the procedure for $l=k-2,\cdots,1$, it follows that $f^{(n)}_{l} (r)$ is uniformly convergent on $[R_{l-1},R_{l}]$ for any $l=1,...,k$. This, in combination with \eqref{lemma3_techeqn7'}, completes the proof. 
\end{proof}

We are now prepared to prove convergence of approximate pressure functions $p^{(n)}(r)$, associated to each staircase density function $\rho^{(n)}(r)$, to a pressure function solving the TOV-equation for the original density function $\rho(r)$. The proof of Theorem \ref{Thm_existence} is then a direct consequence of the following lemma.

\begin{Lemma}\label{lemma4}
	Assume $\rho(r)$ is a density functions as in Lemma \ref{lemma2}, and assume $\{\rho^{(n)}(r)\}$ is the corresponding sequence of staircase density functions as defined in \eqref{partition}, such that $\{\rho^{(n)}(r)\}$ converges to $\rho(r)$ in $L^{\infty}([0, R])$. For each $n\in \mathbb{N}$, let $p^{(n)}(r)$ be the solution of the TOV equation \eqref{TOV3} associated to $\rho^{(n)}(r)$ with $p^{(n)}(R)=0$ as introduced in \eqref{pressure}. Then each $p^{(n)}(r)$ is Lipschitz continuous, and the sequence $\{p^{(n)}(r)\}$ converges to a Lipschitz continuous function $p(r)$ on $[0,R]$ with respect to the Lipschitz norm, which solves the TOV-equation \eqref{TOV3} for the original $\rho(r)$ and which satisfies the uniform bound
\begin{equation} \label{bound_p-Lemma4}
p(r) \ \leq \ \frac{\Delta }{e^{\lambda(r)}-\Delta}\rho(r) \ \leq \ \frac{\Delta}{1-\Delta} \rho(r).
\end{equation}
\end{Lemma}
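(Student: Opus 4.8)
The plan is to bootstrap from the $L^{\infty}$-convergence $\phi^{(n)}\to\phi$ of Lemma~\ref{lemma3}, via the algebraic relation \eqref{pressure} and the a priori bounds of Lemma~\ref{lemma1}, to convergence of the approximate pressures, and then to pass to the limit in the TOV-equation \eqref{TOV3} in integrated form. First I would record that each $p^{(n)}$ is Lipschitz with a constant independent of $n$. It is continuous on $[0,R]$ by the Rankine--Hugoniot matching of Section~\ref{Sec_staircase-density}, and on each open subinterval of its partition it is $C^{1}$ and solves \eqref{TOV3} classically, with right-hand side that may be regrouped as $\mathcal F^{(n)}(r)\equiv-(\rho^{(n)}+p^{(n)})\big(\tfrac{m^{(n)}}{r^{2}}+4\pi r\,p^{(n)}\big)\big(1-\tfrac{2m^{(n)}}{r}\big)^{-1}$; using $0\le p^{(n)}\le\tfrac{\Delta}{1-\Delta}\rho^{(n)}\le\tfrac{\Delta}{1-\Delta}\|\rho\|_{L^{\infty}}$ from Lemma~\ref{lemma1}, $\tfrac{m^{(n)}(r)}{r^{2}}\le\tfrac{4\pi}{3}\|\rho\|_{L^{\infty}}r$, and $\big(1-\tfrac{2m^{(n)}}{r}\big)^{-1}\le K^{-2}$ from \eqref{approx}, one gets $|\mathcal F^{(n)}(r)|\le L$ on $[0,R]$ for a constant $L$ depending only on $\|\rho\|_{L^{\infty}},\Delta,K,R$, and a continuous function that is $C^{1}$ with derivative bounded by $L$ on each piece of a finite partition is globally $L$-Lipschitz.

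Next I would prove $p^{(n)}\to p\equiv\tfrac{\rho\phi}{\rho-\phi}$ in $L^{\infty}([0,R])$, with $\phi\equiv\lim_{n}\phi^{(n)}$. Letting $n\to\infty$ in $\phi^{(n)}\le\Delta\rho^{(n)}$ (Lemma~\ref{lemma1}) gives $\phi\le\Delta\rho$, so on any interval $[0,R-\eta]$ the densities $\rho$ and, for $n$ large, $\rho^{(n)}$ are bounded below by a positive constant, whence $\rho^{(n)}-\phi^{(n)}\ge(1-\Delta)\rho^{(n)}$ is bounded below there, the map $(a,b)\mapsto\tfrac{ab}{a-b}$ is uniformly Lipschitz on the relevant range, and $p^{(n)}\to p$ uniformly on $[0,R-\eta]$. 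Near the surface the only subtlety is when $\rho(R^{-})=0$; there the bound $0\le p^{(n)}\le\tfrac{\Delta}{1-\Delta}\rho^{(n)}\le\tfrac{\Delta}{1-\Delta}\rho$ together with $\rho(r)\to0$ as $r\to R^{-}$ makes $|p^{(n)}-p|\le p^{(n)}+p$ uniformly small on $[R-\eta,R]$ once $\eta$ is fixed small. Combining the two ranges gives $p^{(n)}\to p$ uniformly on all of $[0,R]$.

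With uniform convergence in hand I would pass to the limit in the integrated TOV-equation. Since each $p^{(n)}$ is Lipschitz, solves \eqref{TOV3} piece-wise, and satisfies $p^{(n)}(R)=0$, it obeys $p^{(n)}(r)=-\int_{r}^{R}\mathcal F^{(n)}(\tau)\,d\tau$. Now $\rho^{(n)}\to\rho$, $\tfrac{m^{(n)}}{\tau^{2}}\to\tfrac{m}{\tau^{2}}$ (the estimate \eqref{lemma3_techeqn2} applies verbatim with the limit $m$ in place of $m^{(j)}$), $\big(1-\tfrac{2m^{(n)}}{\tau}\big)^{-1}\to\big(1-\tfrac{2m}{\tau}\big)^{-1}$ (likewise via \eqref{lemma3_techeqn4}), and $p^{(n)}\to p$, all in $L^{\infty}([0,R])$ and all uniformly bounded; since $\mathcal F$ is a bounded, jointly Lipschitz function of these quantities on the relevant bounded set, the integrand converges uniformly and $p(r)=-\int_{r}^{R}\mathcal F(\tau,\rho,m,p)\,d\tau$. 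Hence $p$ is Lipschitz, solves \eqref{TOV3} for the original $\rho$ with $p(R)=0$, and is $C^{1}$ on each interval of continuity of $\rho$. The same uniform convergence of the integrand gives $(p^{(n)})'=\mathcal F^{(n)}\to\mathcal F(\cdot,\rho,m,p)=p'$ in $L^{\infty}$, so $p^{(n)}\to p$ in the Lipschitz norm, and \eqref{bound_p-Lemma4} follows by letting $n\to\infty$ in $p^{(n)}(r)\le\tfrac{\Delta}{e^{\lambda_{n}(r)}-\Delta}\rho^{(n)}(r)$ (Lemma~\ref{lemma1}), using $\rho^{(n)}\to\rho$, $e^{\lambda_{n}}\to e^{\lambda}$ and continuity of $x\mapsto\tfrac{\Delta}{e^{\lambda}-\Delta}x$.

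The main obstacle I anticipate lies at the two endpoints: the TOV right-hand side superficially degenerates as $r\to0$ (where $m\to0$) and as $r\to R$ (where $\rho$ may tend to $0$), and the substitution \eqref{pressure} superficially degenerates where $\rho-\phi$ could vanish. These are all neutralised by the a priori bound $p^{(n)}\le\tfrac{\Delta}{1-\Delta}\rho^{(n)}$ of Lemma~\ref{lemma1}, which keeps $\rho-\phi$ bounded away from $0$ relative to $\rho$ and forces $p^{(n)}$ and the TOV right-hand side to be uniformly small near $r=R$, and by the regrouping $\mathcal F=-(\rho+p)\big(\tfrac{m}{r^{2}}+4\pi rp\big)\big(1-\tfrac{2m}{r}\big)^{-1}$, whose factors are each uniformly bounded and uniformly convergent on all of $[0,R]$. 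A secondary point, needed for the uniqueness assertion in Theorem~\ref{Thm_existence}, is that in this regrouped form $\mathcal F$ is globally Lipschitz in $p$ uniformly in $r\in[0,R]$, so a Gr\"onwall estimate applied piece by piece, starting from the prescribed value $p(R)=0$ and working inward across the finitely many discontinuities of $\rho$, forces any two $C^{0,1}$ solutions with this data to coincide.
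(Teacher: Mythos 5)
Your proposal is correct and follows essentially the same route as the paper: both bootstrap the $L^\infty$-convergence of $\phi^{(n)}$ from Lemma \ref{lemma3} through the algebraic relation \eqref{pressure} (controlling the denominator via the a priori bound $\phi^{(n)}/\rho^{(n)}\leq\Delta$ from Lemma \ref{lemma1}), then use the regrouped TOV right-hand side together with the estimates \eqref{lemma3_techeqn1}--\eqref{lemma3_techeqn4} to upgrade to convergence of the derivatives in $L^\infty$, i.e.\ in $W^{1,\infty}$, and obtain \eqref{bound_p-Lemma4} by passing to the limit in the bound of Lemma \ref{lemma1}. Your passage to the limit in the integrated equation is just a slightly more explicit rendering of the paper's closing remark that the limit solves \eqref{TOV3}, and your careful treatment of the edge cases $r\to 0$ and $\rho(R^-)=0$ is a welcome (if minor) refinement of the paper's argument, which handles these implicitly by working with the ratios $\phi^{(n)}/\rho^{(n)}$.
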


\begin{proof}
	By convergence of $\{\rho^{(n)}(r)\}$ and $\{\phi^{(n)}(r)\}$ in $L^{\infty}([0, R])$, it follows that for any $\varepsilon>0$, there exists an integer $N$ independent of $r$ such that 
	\begin{equation} \label{lemma4_techeqn1} 
		|\rho^{(i)} (r) - \rho^{(j)} (r)|<\varepsilon
		\hspace{1cm} \text{and} \hspace{1cm}
		|\phi^{(i)} (r) - \phi^{(j)} (r)|<\varepsilon
	\end{equation}
	for all $i,j>N$, almost everywhere $r \in [0,R]$. 
	Recall, by \eqref{pressure}, the pressure function for $r\in[0,R]$ takes the form
	\begin{align*}
		p^{(n)}(r)=\frac{\rho^{(n)}(r) \phi^{(n)}(r)}{\rho^{(n)}(r)-\phi^{(n)}(r)}.
	\end{align*}
	 Then we obtain, by combining the above formula for $p^{(n)}(r)$ with \eqref{lemma4_techeqn1}, that
	\begin{align}
	 	|p^{(i)}(r)-&p^{(j)}(r)|  
		\leq 
		\frac{| \rho^{(i)}(r) \rho^{(j)}(r) |\   |\phi^{(i)}(r) - \phi^{(j)}(r)| +  | \phi^{(i)}(r) \phi^{(j)} (r) | \   |\rho^{(i)}(r) - \rho^{(j)}(r) | }{|\rho^{(i)} (r) - \phi^{(i)} (r) |\  |\rho^{(j)} (r) -\phi^{(j)}(r) |}  
		\cr 
		\leq & 
		\left|1 -\frac{ \phi^{(i)} (r) }{\rho^{(i)} (r) }\right|^{-1} 
		\left|1 -\frac{ \phi^{(j)} (r) }{\rho^{(j)} (r) }\right|^{-1} 
		\left(|\phi^{(i)}(r) - \phi^{(j)}(r)| +  \left| \frac{\phi^{(i)}(r) }{\rho^{(i)}(r)}\right| \left| \frac{\phi^{(j)}(r) }{\rho^{(j)}(r)}\right|  |\rho^{(i)}(r) - \rho^{(j)}(r) | \right)
		\cr 
\leq & C \epsilon.   \label{lemma4_techeqn3}
	\end{align}
	
Now, recall that $p^{(n)}$ is Lipschitz continuous on $[0,R]$, as a result of the matching construction leading to \eqref{pressure}. By \cite[Ch.5]{evans}, the Lipschitz norm is equivalent to the Sobolev norm $W^{1,\infty}$, defined by 
		\begin{equation}\label{Lipschitz}
			\|p^{(n)}(r)\|_{W^{1,\infty}}:=\|p^{(n)}(r)\|_{L^\infty}+\left|\left|\frac{d}{dr}p^{(n)}(r)\right|\right|_{L^\infty},
		\end{equation} 
in terms of the $L^\infty$ norm $\|\cdot\|_{L^\infty}$ on $[0,R]$. We now estimate the $W^{1,\infty}$-norm on differences of sequence elements $\{p^{(n)}\}$ as follows: 
By	the TOV equation \eqref{TOV3} we find for any $r\in [0,R]$ that
	\begin{align*}
			\left|\frac{dp^{(i)}}{dr}-\frac{dp^{(j)}}{dr}\right| 
			=&
			\left|
			(\frac{m^{(i)}}{r^2}+4\pi r p^{(i)}) 
			(p^{(i)}+\rho^{(i)})
			e^{2\lambda_i }
			-
			(\frac{m^{(j)}}{r^2}+4\pi r p^{(j)}) 
			(p^{(j)}+\rho^{(j)})
			e^{2\lambda_j }
			\right|
			\\
			\leq &  
			C \Big( \left|\frac{m^{(i)}}{r^2}-\frac{m^{(j)}}{r^2}\right| + \left|(p^{(i)}-p^{(j)})\right|
			+ \left|(\rho^{(i)}-\rho^{(j)}) \right|
			+
			 \left| e^{2\lambda_i }-e^{2\lambda_j } \right|   \Big)
			\leq  C \epsilon,
		\end{align*}
where the last inequality follows by \eqref{lemma4_techeqn3} as well as \eqref{lemma3_techeqn1} - \eqref{lemma3_techeqn3}.	Therefore, $\{p^{(n)}(r)\}$ is a Cauchy sequence in $W^{1,\infty}([0,R])$ and hence converges to some Lipschitz continuous function $p(r)$ on $[0,R]$. Note finally that by \eqref{bound_p_Sec5} of Lemma \ref{lemma1}, we have for each $r\in [0,R]$ the upper bound
$$
p(r) \ \leq \ \frac{\Delta }{e^{\lambda_n(r)}-\Delta}\rho^{(n)}(r) \ \leq \ \frac{\Delta }{e^{\lambda(r)}-\Delta}\rho(r),
$$ 
by construction of $\rho^{(n)}(r)$ in \eqref{partition}. That $p(r)$ solves the TOV-equation \eqref{TOV3} for the original $\rho(r)$ follows directly by the above Lipschitz convergence of the approximate solutions $p^{(n)}(r)$ to $p(r)$.  This completes the proof.
	\end{proof}

\section{Necessary Condition}  \label{Sec_proof_nec}

In \cite{Buchdahl}, Buchdahl shows that in order to avoid pressure blow-up at the origin $r=0$, the stellar radius $R$ and its total mass $M$ must satisfy $\frac{2M}{R}<\frac{8}{9}$, cf. \eqref{Buchdahl}. In Theorem \ref{Thm_necessary} we generalize Buchdahl's condition \eqref{Buchdahl} and provide a new necessary condition for the existence of a finite central pressure.  We now prove Theorem \ref{Thm_necessary}. For this assume $\rho(r)$ is a bounded, non-increasing, non-negative, piece-wise differentiable density function with at most finitely many jump discontinuities on $[0,R]$, such that $\rho(r)=0$ for all $r>R$, and such that \eqref{Outside_Schwarzschild} holds for all $r\in[0,R]$. Assume $p(r)$ is a non-negative Lipschitz continuous solution of the TOV equation \eqref{TOV3} with $p(R)=0$ and which is bounded on $[0,R]$. Then Theorem \ref{Thm_necessary} asserts that the density $\rho(r)$ satisfies for all $r\in[0,R)$ the (necessary) condition   	
\begin{equation}\label{necessary2}
		\rho(r)> e^{-\lambda(r)}\int^{R}_r \frac{m(r)\rho(r) }{r^2}  e^{3\lambda(r)} dr.
\end{equation}

\noindent {\it Proof of Theorem \ref{Thm_necessary}.}
	The jump discontinuities form a partition of $[0,R]$ given by $0=R_0<R_1<\cdots<R_N=R$. So $\rho(r)$ is differentiable on $(R_{k-1},R_k)$ for every $1\leq k \leq N$. 
	Now, let $r \in [R_{k-1},R_{k})$ for some fixed $k \in \{ 1,..., N\}$, and define 
\begin{equation} \label{phi_sec6}
\Phi(r)=\frac{p(r) \rho(r)}{p(r)+\rho(r)}  \geq 0.
\end{equation} 
Substituting $\Phi(r)$ into the TOV equation \eqref{TOV3} yields
	\begin{equation}
		\Phi'(r) =  \frac{\rho'(r)}{\rho^2(r)}\Phi^2(r) +\frac{m(r)\Phi(r)-4 \pi  r^3\rho(r)\Phi(r) - m(r)\rho(r) }{r(r-2m(r))}.
	\end{equation}
	Since $\rho'(r)\leq0$ for any $r\in[R_{k-1},R_k)$, we obtain
	\begin{equation}
		\Phi'(r) \leq \frac{m(r) - 4 \pi  r^3\rho(r) }{r (r-2m(r))} \Phi(r) - \frac{m(r) \rho(r)  }{r(r-2m(r))},
	\end{equation}
	which can be rewritten as 
	\begin{equation}
		\frac{d}{dr}\left( \Phi(r)  e^{\lambda(r)}\right) 
		\leq
		- \frac{m(r) \rho(r)  }{r^2} e^{3\lambda(r)}.
	\end{equation}
	Integrating both sides implies
	\begin{equation} \label{proof_nec_eqn1}
		\int_r^{R_k} \frac{m(\tau)\rho(\tau)}{\tau^2} e^{3\lambda(\tau)} d\tau \leq \Phi(r)  e^{\lambda(r)} -\Phi(R_k^-)  e^{\lambda(R_k^-)}.
	\end{equation}
	Similarly,  for $i= k,\cdots,N-1$, we have
	\begin{equation} \label{proof_nec_eqn2}
		\int^{R_{i+1}}_{R_{i}} \frac{m(\tau)\rho(\tau)}{\tau^2} e^{3\lambda(\tau)} d\tau
		\leq
		\Phi(R_{i}^+)  e^{\lambda(R_{i}^+)} -\Phi(R_{i+1}^-)  e^{\lambda(R_{i+1}^-)}.
	\end{equation}	 
Thus, adding \eqref{proof_nec_eqn1} to \eqref{proof_nec_eqn2} for all $i\in \{k,...,N-1\}$, and using that $\Phi(R_N)=0$ and that $e^\lambda(r)$ is continuous, we obtain 
	\begin{equation} \label{proof_nec_eqn3}
		\int_r^{R_N} \frac{m(r)\rho(r)}{r^2} e^{3\lambda(r)} dr 
		\ \leq \
		\Phi(r) e^{\lambda(r)} -  \sum_{i=k}^{N-1} (\Phi(R_{i}^-)-\Phi(R_{i}^+))  e^{\lambda(R_{i})}  
	\end{equation}
Now, observing that $\Phi(R_i^-)\geq\Phi(R_i^+)$, which follows from  $\rho(R_i^-)\geq\rho(R_i^+)$ in combination with the RH-condition, estimate \eqref{proof_nec_eqn3} reduces to 
\begin{equation}   \label{proof_nec_eqn4}
		\int_r^{R_N} \frac{m(r)\rho(r)}{r^2} e^{3\lambda(r)} dr  
		\ \leq \ \Phi(r) e^{\lambda(r)}.
\end{equation}  
Using finally that boundedness of $p(r)=\frac{\rho(r)\Phi(r)}{\rho(r)-\Phi(r)}$ implies $\rho(r) > \Phi(r)$ for any $r \in [0,R)$, the sought-after inequality \eqref{necessary2} readily follows.  
\hfill $\Box$ \\

Note that \eqref{proof_nec_eqn3} is a necessary condition stricter than condition \eqref{necessary2}, but less convenient to check. 
Moreover, note that \eqref{phi_sec6} can be written equivalently as $\frac{1}{\Phi(r)}= \frac{1}{p(r)} + \frac{1}{\rho(r)}$, from which in combination with \eqref{proof_nec_eqn4} one obtains the following lower bound on the pressure, $r \in [0,R]$,
\begin{equation}
		p(r)\geq e^{-\lambda(r)}\int_r^{R} \frac{m(r)\rho(r)}{r^2} e^{3\lambda(r)} dr
= \left(1-\frac{2m}{r}\right)^{\frac{1}{2}}  \int^{R}_r \frac{m\rho}{r^2}  \left(1-\frac{2m}{r}\right)^{-\frac{3}{2}} dr.
	\end{equation}
Note finally that, if the  dominant energy condition $p(r)\leq\rho(r)$ holds, then \eqref{proof_nec_eqn4} leads to the following stricter necessary condition for $r \in [0,R)$
\begin{equation}
e^{-\lambda(r)}\int_r^R \frac{m(r) \rho(r)  }{r^2} e^{3\lambda(r)} dr < \frac{\rho(r)}{2}.
\end{equation}

\section{Examples and Applications}\label{Sec_app_examples}

We now investigate several natural examples of density profiles with the goal to better understand the constraint imposed by our sufficient condition \eqref{Thm1} for existence of finite pressure functions (and hence of stellar structures). These examples further illustrate the difference between our necessary and our sufficient conditions, \eqref{necessary_cond_Thm} and \eqref{Thm1}, as well as their relation to the classical Buchdahl limit. 

\subsection*{Example 1:}
We first illustrate that our condition \eqref{Thm1} is a refinement of the classical Buchdahl condition. For this, consider density functions with the property that 
\begin{equation} 
	m(r)=Ar^\alpha
\end{equation} 
such that $1<\alpha\leq3$. When $\alpha=3$, the density $\rho(r)$ is a constant on $[0,R]$, and \eqref{Thm1} reduces to the classical Buchdahl condition, cf. the discussion below Theorem \ref{Thm_existence}.  Note, if $\alpha>3$, $\rho(r)$ is increasing on $[0,R]$, a case we consider unphysical and which we omit in our setting. Denoting $M=m(R)=AR^\alpha$, the integral in \eqref{Thm1} can be computed as  
\[
\int_{0}^{R}Ar^{\alpha-2}(1-2Ar^{\alpha-1})^{-\frac{3}{2}}dr 
=\frac{1}{\alpha-1}\Big(\big(1-\frac{2M}{R}\big)^{-\frac{1}{2}}-1\Big).
\]
Thus, setting for simplicity $\Delta=1$ and using a strict inequality, condition \eqref{Thm1} can be written as
\begin{equation} \label{refined_Buchdahl}
	\frac{2M}{R} <  1-\frac{1}{\alpha^2},
\end{equation}
which is a refinement of the classical Buchdahl condition $\frac{2M}{R}< \frac{8}{9}\approx0.88889$.

\subsection*{Example 2}
Assume the density profile takes the Tolman VII form (cf. \cite{Raghoonundun2015}), 
\begin{equation}\label{model2}
	\rho(r) = \rho_c \left(1- \frac{r^2}{R^2}\right)
\end{equation}
for $r\in[0,R]$, where $\rho_c >0$ is the stellar central density.\footnote{Similar analysis can be applied for the case that the density $\rho(r)$ dose not vanish at the stellar boundary, i.e., $\rho(R)\not=0$.}  Direct integration yields
\begin{equation} \label{example2_eqn1}
	m(r)=4\pi r^3 \rho_c \left(\frac{1}{3}-\frac{r^2}{5R^2}\right),
\end{equation}
with $r\in[0,R]$. Since $\max_{r\in[0,R]}\frac{2m(r)}{r}=\frac{10}{9} \pi  \rho_c R^2$, condition \eqref{Outside_Schwarzschild} on the Schwarzschild radius reduces to $0 < \pi  \rho_c R^2 <\frac{9}{10}$; a constraint we henceforth assume. Now the integral in \eqref{Thm1} becomes
\begin{equation}
	\int^{R}_0 \frac{m(r)}{r^2} \left(1-\frac{2m(r)}{r}\right)^{-\frac{3}{2}} dr
	=
	\frac{(9-8\pi \rho_c R^2)(1-\frac{16}{15}\pi \rho_c R^2)^{-\frac{1}{2}}-9+20\pi \rho_c R^2}{4(9-10 \pi  \rho_c R^2)} \ \equiv \ f_1(x),
\end{equation}
for $x\equiv \pi \rho_c R^2$ with $0<x<\frac{9}{10}$.
Since $f_1(x)$ is strictly increasing on $(0,\frac{9}{10})$, we find that our sufficient condition \eqref{Thm1} is equivalent to $x\in(0,\frac{73-\sqrt{145}}{96})$, which we write as 
\begin{equation}
	0<\rho_c R^2< \tfrac{73-\sqrt{145}}{96 \pi } \approx 0.20212 
\end{equation}
for our condition that the pressure $p(r)$ is finite. From this, in combination with \eqref{example2_eqn1}, we directly obtain
\[
	\frac{2M}{R}=\frac{16}{15}\pi \rho_c R^2  < \frac{73-\sqrt{145}}{90}\approx 0.67732,
\]
as a minimal upper bound on mass radius relations of stellar models of form \eqref{model2}.

Next, based on the model \eqref{model2}, we study the necessary condition \eqref{necessary2} at $r=0$, which is given by 
\begin{equation}\label{model-ne}
	\int_0^R \frac{m(r) \rho(r) }{r^2}  \left(1-\frac{2m}{r}\right)^{-\frac{3}{2}}  dr 
	< \rho_c.
\end{equation}
Integration shows that \eqref{model-ne} is equivalent to 
\begin{equation}
	f_2(x)\equiv \frac{40  x-33+\sqrt{225-240 x}}{72-80 x}
	+\sqrt{\frac{5}{128x}}
	\ln\left(\frac{\sqrt{6}\sqrt{15-16 x}+2 \sqrt{x}}{3 \sqrt{10}-10 \sqrt{x} }\right)-1 <0 ,
\end{equation}
where again $x=\pi\rho_c R^2$. Since $f_2(x)$ is strictly increasing on $(0,\frac{9}{10})$, we find by numerical computation that $f_2(x)<0$ provided $x < 0.78856$. From this we conclude with the boundaries
\begin{equation} \label{example2_eqn2}
	\rho_c R^2 < 0.25100
\hspace{1cm} \text{and} \hspace{1cm}
	\frac{2M}{R}< 0.84112.
\end{equation}
Equation \eqref{example2_eqn2} shows a difference of $~ 0.05$ between our necessary and our sufficient condition for the model \eqref{model2}, as well as a stricter condition on the mass radius relationship than the classical Buchdahl condition ($\frac{2M}{R}<8/9 \approx 0.88889$).

\subsection*{Example 3}   

For the last example assume the metric component takes the form
\begin{equation}
	e^{2\lambda(r)} = (1+k r^2)^n,
\end{equation}
where $k>0$ and $n\in\mathbb{N}$. Since $e^{-2\lambda(r)} = 1-\frac{2m(r)}{r}$, it follows that
\begin{equation}
	m(r)=\frac{r}{2}\left(1-\frac{1}{(1+kr^2)^n}\right)
\end{equation}
and by \eqref{TOV2}
\begin{equation}
	\rho(r)=\frac{1}{8\pi r^2} \left(1+\frac{(2n-1)k r^2-1}{(1+kr^2)^{n+1}}\right).
\end{equation} 
Note that $\rho(r)$ is decreasing for $r\in[0,\infty)$ and, using l'H\^opital's rule, we find the central density is given by $$\rho(0)=\frac{3kn}{8\pi}.$$ 
Now our sufficient condition \eqref{Thm1} gives rise to
\begin{align}\label{example3}
	\int_{0}^{R} \frac{m(r)}{r^2}\left(1-\frac{2m(r)}{r}\right)^{-\frac{3}{2}} dr
	& =  \int_{1}^{1+kR^2} \frac{\tau^n -1}{4(\tau-1)} \tau^{\frac{3}{2} n} d\tau \cr  
	& = \sum_{i=1}^{n} \frac{1}{2n+4i}\left(e^{(1+\frac{2i}{n})\lambda(R)}-1\right) \ < \ 1,
\end{align}
by a change of variables to $\tau = 1+kr^2$. For $n=1$, \eqref{example3} is equivalent to 
\[
	\frac{1}{6}\left(e^{3\lambda(R)}-1\right)<1,
\]
which in turn is equivalent to $2M/R<1-7^{-\frac{2}{3}}\approx0.72672$, as a condition for the pressure to be bounded by Theorem \ref{Thm_existence}.
For $n=2$, \eqref{example3} is equivalent to
\[
	\frac{1}{8} e^{2\lambda(R)} + \frac{1}{12} e^{3\lambda(R)} -\frac{5}{24}<1,
\]
which in turn is equivalent to $2M/R	< 0.75662$. 
When $n=3$,  \eqref{example3} is equivalent to 
\[
	\frac{1}{10} e^{5\lambda(R)/3} + \frac{1}{14} e^{7\lambda(R)/3} + \frac{1}{18} e^{3\lambda(R)} - \frac{143}{630} < 1,
\]
which is equivalent to $2M/R<0.76787$, as a condition for the pressure to be bounded by Theorem \ref{Thm_existence}. 
For large $n$, by considering the Riemann sum, we can show \eqref{example3} converges to
\begin{equation}
	\frac{1}{4}\left(\operatorname{Ei}(3\lambda(R))-\operatorname{Ei}(\lambda(R)) - \ln3 \right)<1,
\end{equation}
where $\mathrm{Ei}(x)=\int _{-\infty }^{x} \frac{e^{t}}{t} dt$ denotes the exponential integral. Numerically we obtain $2M/R<0.79225$.

\bibliographystyle{plain}
\bibliography{BIBTEX.bib}

\end{document}